%
\documentclass[runningheads]{llncs}
\usepackage{graphicx}
%

\usepackage[subtle]{savetrees}
\usepackage{amssymb}
\usepackage{lmodern}        
\usepackage[T1]{fontenc}  
\usepackage[utf8]{inputenc} 

\usepackage{subfig} 

\usepackage{placeins} 
\usepackage{booktabs} 
\usepackage{multirow} 
\usepackage{xcolor}

\usepackage{paralist} 
\usepackage{verbatim} 

\usepackage{array} 
\usepackage{mathtools} 
\usepackage{amssymb} 
\usepackage{algorithm}
\usepackage{thmtools,thm-restate}
\usepackage[noend]{algpseudocode}


\newcommand{\tol}[1]{\lceil #1\rceil}
\newcommand{\finv}{f^{inv}}
\newcommand{\fem}{f^{\emptyset}}

\newif\iflong

\begin{document}
\title{On tolerance of discrete systems with respect to transition perturbations\thanks{This work has been supported by the National Science Foundation under NSF SaTC awards CNS-1801342 and CNS-1801546.
	The proofs of our results can be found in \cite{Meira-Goes2021-tolerance}.
	}}
\titlerunning{On tolerance of discrete systems}
%
\author{R\^omulo Meira-G\'oes\inst{1,2} \and
Eunsuk Kang\inst{1} \and
St\'ephane Lafortune\inst{2} \and\\
Stavros Tripakis\inst{3}
}
\authorrunning{R. Meira-G\'oes et al.}
\institute{School of Computer Science,  Carnegie Mellon University, Pittsburgh USA \email{\{rmeirago,eunsukk\}@andrew.cmu.edu}\and
Dept. of Elect. Eng. and Computer Science,  University of Michigan, Ann Arbor USA
\email{\{romulo,stephane\}@umich.edu}\\
\and
Khoury College of Computer Science, Northeastern University, Boston USA\\
\email{stavros@northeastern.edu}}
\maketitle              
\begin{abstract}
Control systems should enforce a desired property for both expected/modeled situations as well as unexpected/unmodeled environmental situations.
Existing methods focus on designing controllers to enforce the desired property only when the environment behaves as expected.
However, these methods lack discussion on how the system behaves when the environment is perturbed.
In this paper, we propose an approach for analyzing control systems with respect to their tolerance against environmental \emph{perturbations}.
A control system tolerates certain environmental perturbations when it remains capable of guaranteeing the desired property despite the perturbations.
Each controller inherently has a level of tolerance against environmental perturbations.
We formally define this notion of tolerance and describe a general technique to compute it, for any given regular property.
We also present a more efficient method to compute tolerance with respect to invariance properties.
Moreover, we introduce and solve new controller synthesis problems based on our notion of tolerance. 
We demonstrate the application of our framework on an autonomous surveillance example.
\keywords{Tolerance \and discrete transition systems \and model uncertainty \and labeled transition systems.}
\end{abstract}

\graphicspath{{Figs/}}
\section{Introduction}
In control systems, a controller is designed to enforce a desired property over the environment it controls.
For example, the cruise control system enforces the car (environment) to maintain a desired speed (desired property).
In this context, classical reactive synthesis methods provide means to synthesize controllers that correctly assure a desired property expressed in formal logic \cite{Ramadge:1987,Pnueli:1989a,Lafortune:2008,Tabuada:2009,Belta:2017}.
However, these methods heavily rely on assumptions about the behavior of the environment.
For instance, the environment in the cruise control system is specified by the dynamics of the car, the road, the meteorological conditions, etc.
Thus, the correct behavior of the controlled system is only guaranteed under these environmental assumptions/models.

Perturbations from the assumed environmental model jeopardize the correctness of the controller, limiting the application of reactive synthesis methods.
Therefore, in addition to correctness, controllers should be designed to tolerate \emph{reasonable} model perturbations. 
Even when the environment behaves unexpectedly, the controlled system must be correct.

Perturbations can be introduced by the designer prior to the synthesis procedure becoming part of the environmental model \cite{Bloem:2009,Majumdar:2011,Topcu:2012,Zhou:1998}.
Therefore, it is possible to use the existing approaches to synthesize a controller that is tolerant against certain perturbations.
 However, given an existing controller, it's not clear how tolerant this controller is and whether it's actually tolerant enough against some set of perturbations under consideration.

A different type of analysis is to pose the following question:
\emph{Which environmental perturbations can an existing controller tolerate?}
Each controller inherently has a \emph{level of tolerance} against environmental perturbations whether or not the designer is aware of it.
Explicitly computing this level of tolerance is useful for the designer in many ways, e.g., it helps the designer to decide if it is safe to deploy the controller or whether a new controller is needed.

In this paper, we investigate the tolerance of controllers for \emph{discrete transition systems}.
We model perturbations as additional transitions to the original model, creating a framework to analyze their impact in the controlled system.
The controller is tolerant against a perturbation if the perturbed controlled system still satisfies a given desired property.
We go on to define a new notion of controller tolerance against perturbations as the set of all perturbations for which the controller is tolerant.
Based on this new notion, we define the problem of computing the controller tolerance given a desired regular property.
We show that this problem can be reduced to a sequence of model checking problems for discrete systems.

To more efficiently solve the computation problem above, we investigate the notion of controller tolerance with respect to invariance properties.
In this case, we show that a single perturbation represents all perturbations for which the controller is tolerant.
This result allows us to reduce the computation problem to a reachability analysis problem.
It also allows us to investigate and solve three controller synthesis problems: synthesis of the most tolerant controller, of the least tolerant controller, and of a controller that achieves a minimum tolerance threshold.

We have four main contributions in this paper:
\vspace*{-.1cm}
\begin{itemize}
	\item We define a new notion of controller tolerance and a general technique to automatically compute it;
	\item We investigate this notion of tolerance with respect to invariance properties and devise an efficient algorithm to compute it;
	\item We propose three new controller synthesis problems and provide their solution based on existing reachability analysis techniques;
	\item We provide a prototype implementation of our algorithms and demonstrate our approach on an example involving surveillance protocols.
\end{itemize}

\section{Motivating example}\label{sect:motivating}
As a motivating example, we consider a surveillance scenario of two autonomous drones, $ego$ and $srv$.
These drones monitor the surroundings of a building as depicted in Fig.~\ref{fig:mot_dist}.
$Ego$ desires to obtain information about the building without being captured by $srv$, i.e., where ``captured" means both that drones are in the same location.
It also assumes that $srv$ surveils the building by following the strategy depicted in Fig.~\ref{fig:srv-str}, i.e., $srv$ surveils the building by always moving in the clockwise direction.

\begin{figure}[thpb]
\centering  
\subfloat[Surveillance overview]  
{  
\centering
\includegraphics[width=0.25\columnwidth]{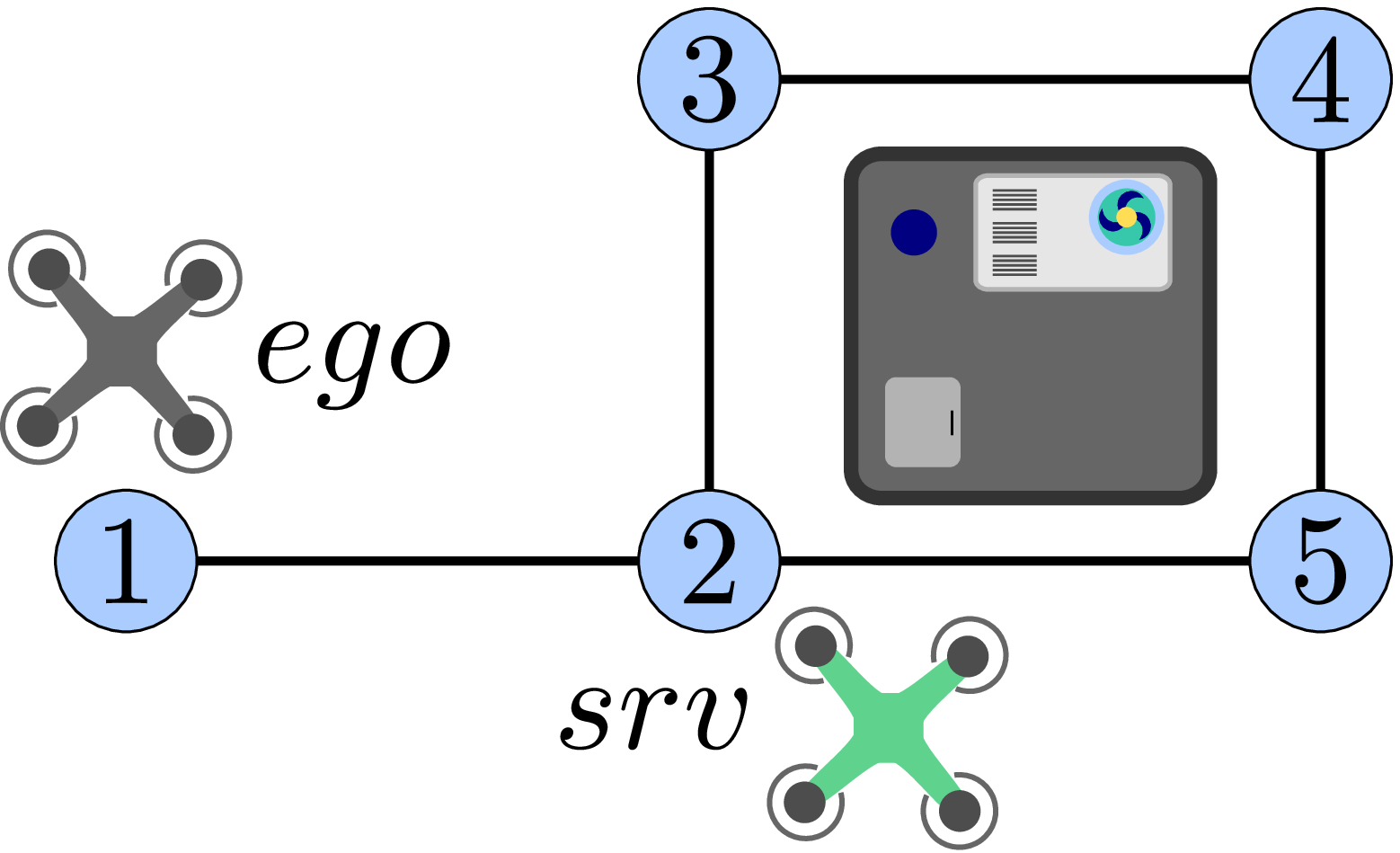}
\label{fig:mot_dist}
}
\qquad
\subfloat[$srv$ strategy assumed by $ego$]  
{  
\centering
\includegraphics[width=0.25\columnwidth]{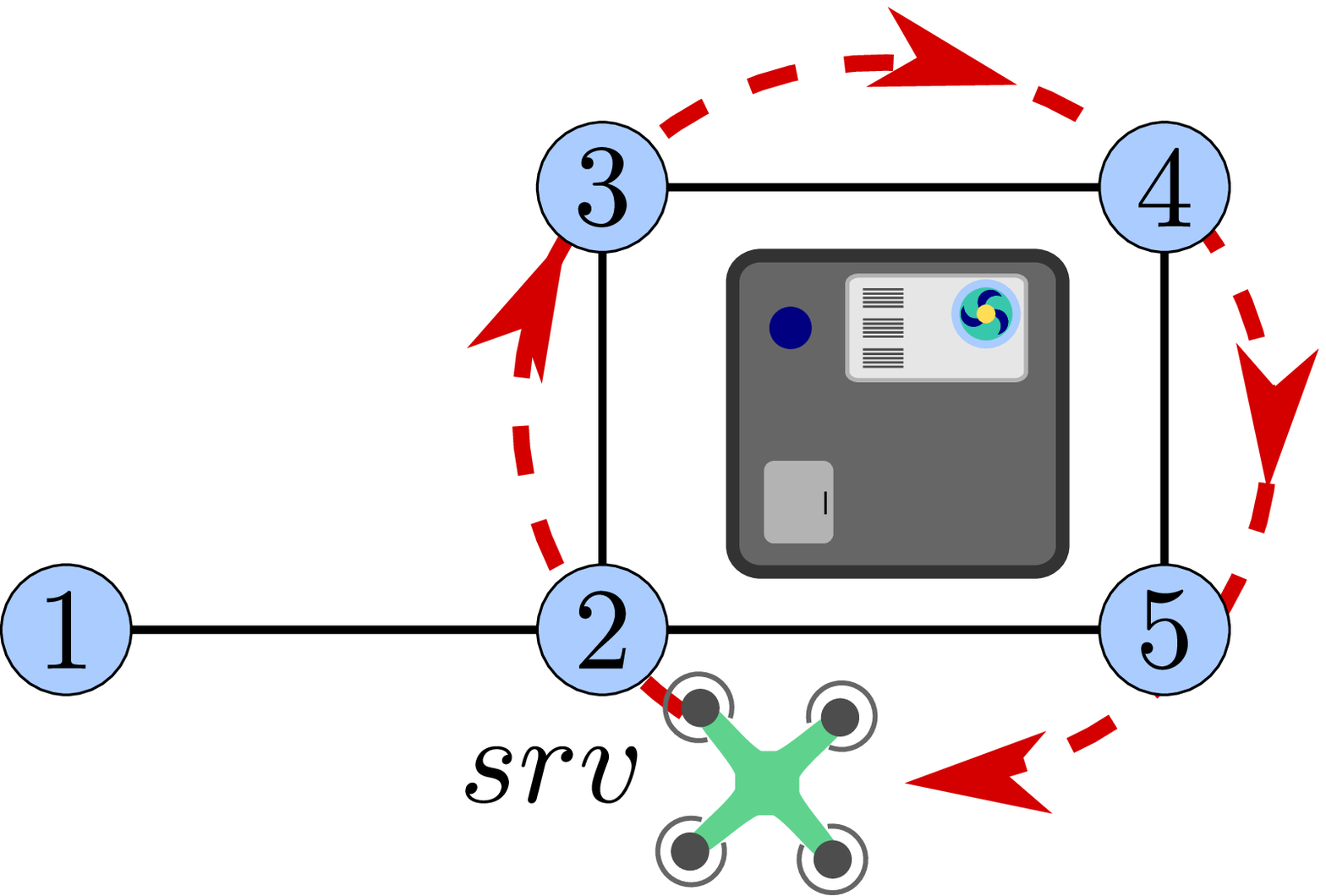}
\label{fig:srv-str}
}
\caption{Motivating example of a surveillance scenario}
\end{figure}

\vspace*{-.2cm}
Classical reactive synthesis techniques can synthesize a controller for $ego$ that guarantees the satisfaction of its property \cite{Pnueli:1977,Pnueli:1989a,Alur:2001,Bloem:2012,Gradel:2002}.
For example, we can synthesize two controllers that guarantee that $srv$ does not capture $ego$: controller 1 maintains $ego$ most of the time in location $1$ but it allows $ego$ to visit location 2 when $srv$ is in location 4, and controller 2 ensures that $ego$ trails two steps behind $srv$.
These controllers, however, guarantee the property \emph{only for the model of the system described above.}

Now, suppose that $srv$ does not conform with the strategy assumed by $ego$, e.g., $srv$ decides to go counter-clockwise to monitor the building.
Because of this perturbation, the two synthesized controllers might not guarantee the property.
Hence, to validate these controllers against these perturbations, an extra verification step must be performed.
For example, if $srv$ moves counter-clockwise, controller 1 guarantees that $ego$ is not captured but controller 2 does not.
However, one needs to verify that indeed these controllers continue or not to satisfy this property.
Another option is to synthesize a new controller based on the model of the system augmented with the possible ``known'' perturbations \cite{Topcu:2012}.

In comparison, as was mentioned in the introduction, we pose the following question:
\emph{For which model perturbations can a controller ensure the given property?}
Our notion of tolerance explicitly states that controller 1 ensures $ego$'s property even when $srv$ moves counter-clockwise, whereas controller 2 does not.
It also allows us to affirm that controller 1 is more tolerant than controller 2.
In general, our notion of tolerance enables analyzing the environmental perturbations a controller tolerates.
Moreover, it also enables comparing different controllers to determine which of them may be more tolerant as well as synthesizing controllers with a given level of tolerance.

\section{Preliminaries}\label{sect:preliminaries}
This section describes the underlying formalism used to model the environment, feedback controlled systems, and the properties enforced by them.

\subsubsection*{Labeled transition systems}
In this work, we use labeled transition systems to model the behavior of the environment.
\begin{definition} 
A \emph{labeled transition system} (LTS) $T$ is a tuple $\langle Q, Act, R, I\rangle$, where $Q$ is a finite set of states, $Act$ is a finite set of actions, $R\subseteq Q\times Act\times Q$ is the transition relation of $T$, and $I\subseteq Q$ is a nonempty set of initial states. 
\end{definition}
Let $Post_T(q,a)$ denote the set of states reachable from state $q\in Q$ and action $a\in Act$, i.e., $Post_T(q,a) := \{q'\in Q\mid (q,a,q')\in R\}$.
A \emph{run} of $T$ starts in an initial state in $I$ and is followed by a finite or infinite alternating sequence of actions and states complying with transitions in $R$, e.g., $x_0a_0x_1a_1\dots x_n$ such that $x_{i+1}\in Post_T(x_i,a_i)$ for all $i<n$ and $x_0 \in I$.
The set of all runs in $T$ is denoted by $Runs(T).$
A \emph{path} of $T$ is the sequence of states in a run of $T$, e.g., for $x_0a_0x_1\in Runs(T)$, then $x_0x_1$ is a path of $T$.
The sets $Paths_{fin}(T)$ and $Paths(T)$ denote the set of finite and all paths in $T$, respectively.

\begin{example}\label{example:motivating}
We model the motivating example in Section~\ref{sect:motivating} using LTS.
The states represent the discrete locations of $ego$, $\{1,2,3,4,5\}$ and $srv$, $\{2,3,4,5\}$.
The possible actions of the system consist of $ego$ selecting its desired next location, i.e., $Act = \{m1,\dots, m5\}$ where $mi$ means that $ego$ moves to location $i$.  
The transition relation is defined by a few update rules and assumptions.
The two drones move synchronously to their next location.
Next, both drones can only move to locations that are connected by an edge in Fig.~\ref{fig:mot_dist}.
Lastly, we assume that $srv$ surveils the building using the strategy defined in Fig.~\ref{fig:srv-str}, e.g., $srv$ moves to location $2$ when $5$ is its current location.
The system is initialized in state $(1,5)$, i.e., $ego$ in location $1$ and $srv$ in location $5$.
Figure~\ref{fig:dist-LTS} partially depicts the LTS $T$ defined by this example.
\end{example}
\begin{figure}[!h]
\centering  
\subfloat[Partial LTS $T$ of the surveillance example. States are of the form $(ego$ location$,\ srv$ location$)$ and edge labels represent the actions of $ego$. Transitions in blue are the missing transitions in this partial LTS. State $(2,2)$ is in red since $srv$ captured by $ego$.]  
{  
\centering
\includegraphics[width=0.5\columnwidth]{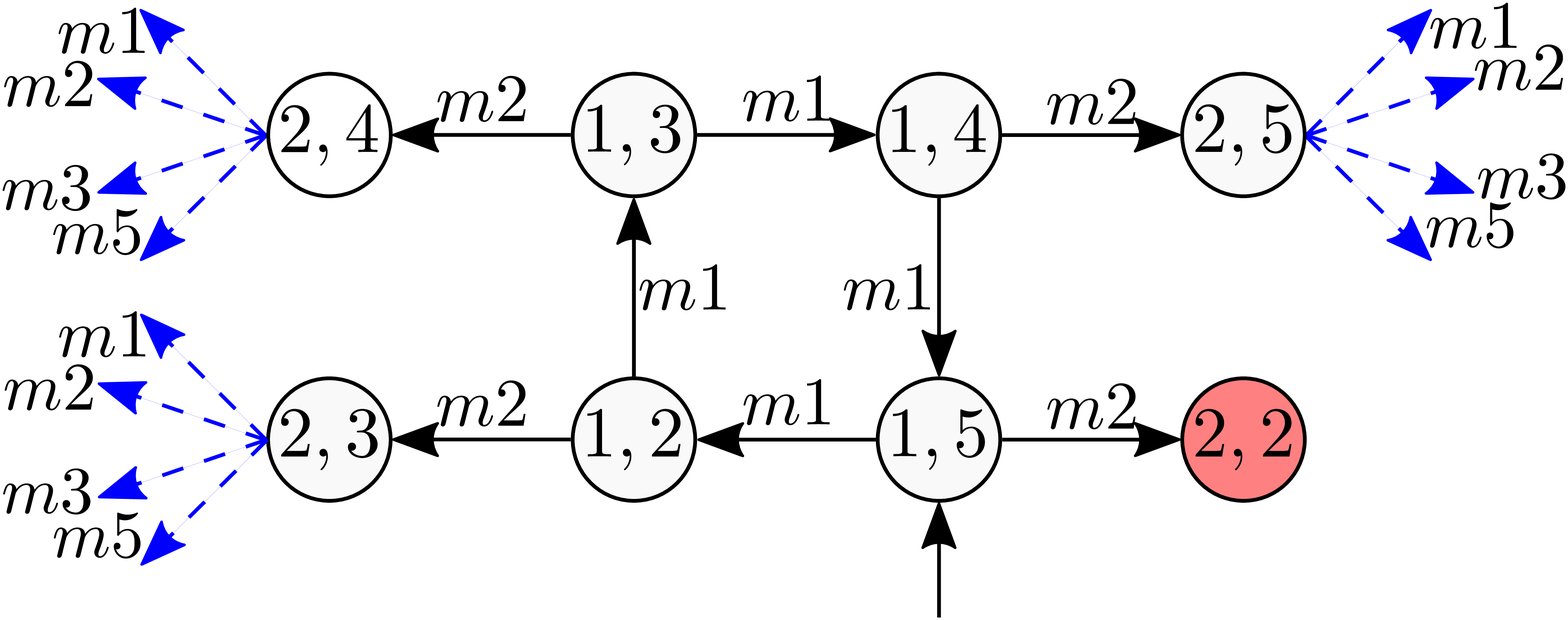}
\label{fig:dist-LTS}
}
\qquad
\subfloat[LTS representation of $T|f$]  
{  
\centering
\includegraphics[width=0.16\columnwidth]{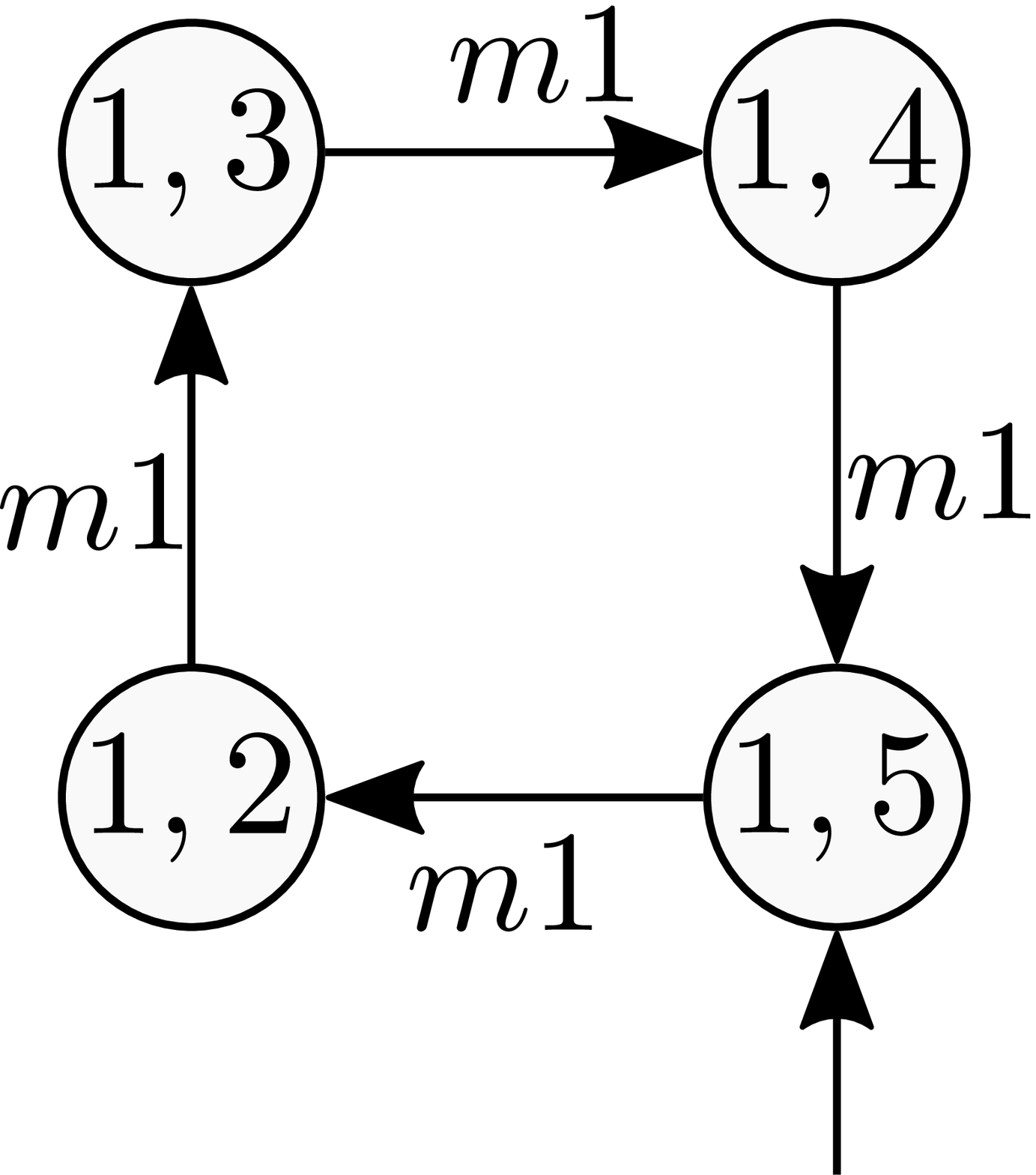}
\label{fig:mot-ctr}
}
\caption{LTS motivating example and controlled system}
\end{figure}

\begin{remark}
Our definition of LTS assumes that the set of actions $Act$ are controllable actions, e.g., $ego$ selects action $m1$.
However, the nondeterministic transition relation encapsulates uncontrollable actions.
Intuitively, after an action is selected, the environment decides which state the system will be in.
\end{remark}

Given a finite set $A$, the usual notations $|A|$, $A^*$, $A^+$, and $A^{\omega}$ denote the cardinality of $A$, the set of all finite sequences, the set of all non-empty finite sequences, and the set of all infinite sequences of elements in $A$, respectively.
For convenience, we write $x_{0\dots n}$ for any finite sequence of states $x_0\dots x_n$.

\subsubsection*{Control strategy}
Given an LTS $T$, a control strategy, or simply \emph{controller}, for $T$ is a function that maps a finite sequence of states to a set of actions, i.e., $f:Q^{+}\rightarrow 2^{Act}$.
A \emph{controlled run} in $T$ is a run of $T$ where actions are constrained by controller $f$, e.g., $x_0a_0\dots\in Runs_{inf}(T)$ such that $a_{i} \in f(x_{0\dots i})$ for any $i\geq 0$.
The set of all controlled runs, denoted by $Runs(T|f)$, defines the closed-loop system of $f$ controlling $T$.
For convenience, this closed-loop system is denoted by $T|f$.
The sets of finite and all controlled paths are denoted by $Paths_{fin}(T|f)$ and $Paths(T|f)$.
A controller has \emph{finite memory} if its decisions depend only on a finite number of states.
It is \emph{memoryless} if its decisions depend on the last state, $f:Q\rightarrow 2^{Act}$.
When $f$ has finite memory, $T|f$ can be represented by an LTS.



\begin{example}\label{example:controller}
Back to our motivating example, we give an example of a simple memoryless controller that is set to maintain $ego$ in location $1$.
Formally, the controller is defined as $f(1,i) = f(2,i) = \{m1\}$ for $i\in \{2,\dots, 5\}$, $f(3,4) = f(5,4) = \{m2\}$, $f(3,5) = f(4,5) = \{m3\}$, $f(3,2) = f(4,2) = f(5,2) = \{m4\}$, $f(4,3) = f(5,3) = \{m5\}$, otherwise $f(q) = \emptyset$.
Figure~\ref{fig:mot-ctr} shows the reachable states of the LTS representation of $T|f$ when $I = \{(1,5)\}$.  
\end{example}

\subsubsection*{Property}

In this work, we consider the class of linear-time (LT) properties over the set of states $Q$ of a given LTS $T$ \cite{Baier:2008}, i.e., property $P$ is a subset $P \subseteq Q^{\omega}\cup Q^{*}$.
In words, an LT property is a set of infinite and finite sequences of states that represents an ``admissible/desired'' set of paths of $T$. 
An LTS $T$ satisfies property $P$, notation $T\models P$, whenever $Paths(T)\subseteq P$.
Similarly, a controlled system $T|f$ satisfies property $P$ if $Paths(T|f)\subseteq P$.

\section{Tolerance against perturbations} \label{sect:tolerant-notion}
\subsection{Perturbations}
Model-based control theory methods are grounded on a model of the environment under control.
This model is always an approximation of the true system.
For this reason, we must take into account possible mismatches between the model of the environment and the true environment when designing a controller.
In the case of LTS, we model these possible mismatches, called \emph{perturbations}, as additional transitions.
Formally, \emph{a perturbation}\footnote{For simplicity, we define perturbations without removing the transition relation $R$ as to not overload our definitions with the removal of $R$. All of our results hold when perturbations are defined as $d\subseteq (Q\times Act\times Q)\setminus R$.} is a set of transitions $d\subseteq (Q\times Act\times Q)$.

For example, transition $(1,2),m1,(1,5)$ represents a perturbation in $srv$ assumed clockwise strategy depicted in Fig.~\ref{fig:srv-str}.
$Srv$ decides to go back to position $5$ instead of going to position $3$.
A second type of perturbation is transition $(1,2),m1,(2,3)$ where $ego$ gets pushed to location $2$ even though it has selected an action to stay in location $1$.

Given a perturbation set, we can define the \emph{perturbed system} by augmenting the transition relation of the LTS with the perturbation set. 

\begin{definition}
Let an LTS $T = \langle Q, Act, R, I\rangle$ and a perturbation $d \subseteq Q\times Act \times Q$ be given.
We define the \emph{perturbed system} $T_{d}$ as $T_{d}:=\langle Q, Act, R\cup d, I\rangle$.
\end{definition}

A controller $f$ that guarantees property $P$ for system $T$, $T|f\models P$, might violate this property for the perturbed system $T_d$. 
Thus, one needs to check if $f$ continues to satisfy $P$ for $T_d$, i.e., if $T_{d}|f\models P$.

\begin{definition}\label{def:tolerable-dev}
Controller $f$ is a \emph{tolerant controller} with respect to LTS $T$, perturbation $d$, and property $P$ if $T_{d}|f\models P$.
Perturbation $d$ is a \emph{tolerable perturbation} with respect to $T$, $f$, and $P$ if $f$ is a tolerant controller with respect to $T$, $d$, and $P$.
\end{definition}

\subsection{Comparing perturbations}
Given perturbations $d_1$ and $d_2$ such that $d_1\subseteq d_2$, $d_2$ perturbs LTS $T$ more than $d_1$ since $Runs(T_{d_1})\subseteq Runs(T_{d_2})$.
Our definition of tolerable perturbations takes into account not only the perturbed system, but a controller $f$ and its controlled behavior, e.g., $T_{d_1}|f$.
By including the controller to close the loop, two incomparable perturbations can generate comparable set of runs, i.e., it might be that $Runs(T_{d_1}|f)\subseteq Runs(T_{d_2}|f)$ even when $d_1\not\subseteq d_2$ and $d_2\not\subseteq d_1$.
In this scenario, $d_2$ perturbs the controlled system more than $d_1$ since $d_2$ has more influence on the controlled behavior.
Moreover, whenever $d_1\subseteq d_2$, it follows that $Runs(T_{d_1}|f)\subseteq Runs(T_{d_2}|f)$ for any controller $f$.
Based on this discussion, we define what it means to a perturbation be more or less ``powerful'' than other.

\begin{definition}\label{def:dev-order}
Let an LTS $T$, controller $f$, and perturbations $d_1$ and $d_2$ be given.
We say $d_1$ is \emph{at least as powerful} as $d_2$ with respect to $f$, denoted by $d_2\preceq_f d_1$, if
\begin{enumerate}
\item[(i)] $Runs(T_{d_2}|f)\subset Runs(T_{d_1}|f)$, or;
\item[(ii)]  $Runs(T_{d_2}|f) = Runs(T_{d_1}|f)\Rightarrow d_2\subseteq d_1$.
\end{enumerate}
Whenever the controller $f$ is clear from the context, we write $\preceq$ instead of $\preceq_f$.
\end{definition}

Intuitively, a perturbation $d_1$ is at least as powerful as perturbation $d_2$ with respect to controller $f$, if the controlled perturbed system $T_{d_1}|f$ can generate any run that $T_{d_2}|f$ can, and if the two controlled systems generate exactly the same set of runs, then $d_2\subseteq d_1$.
It follows that the ordering $\preceq$ forms a partial order over the set of perturbations of $T$.
To provide more intuition on $\preceq$, we have the following example.

\begin{example}\label{example:perturbation-relation}
Consider the LTS $T$ shown in Fig.~\ref{fig:T-cx-inv-ctr} and the property defined by all sequence of states that do not reach state $3$, e.g., the sequence $143$ violates this property.
We define the memoryless controller $f$ as $f(q) = \{b\}$ if $q\neq 3$, and $f(3) = \emptyset$.
It follows that $f$ satisfies the stated property, i.e., $T|f\models P$.

Consider the tolerable perturbations $d_1 = \{(1,b,2)\}$, $d_2 = \{(1,b,4)\}$, $d_3 = \{(2,b,3)\}$, and $d_4 = \{(4,b,3)\}$.
Perturbations $d_1$ and $d_2$ are at least as powerful as $d_3$ and $d_4$, i.e., $d_3\preceq d_1$, $d_4\preceq d_1$, $d_3\preceq d_2$, and $d_4\preceq d_2$.
On the other hand, $d_1$ and $d_2$ are incomparable with respect to $\preceq$ as their perturbed controlled systems generate incomparable runs.
Perturbations $d_3$ and $d_4$ are also incomparable even though $Runs(T_{d_3}|f) = Runs(T_{d_4}|f)$.
In this case, condition (ii) in Def.~\ref{def:dev-order} is violated as $d_3\not\subseteq d_4$ and $d_4\not\subseteq d_3$.
\begin{figure}[!h]
\centering  
\subfloat[LTS $T$]  
{  
\centering
\includegraphics[width=0.16\columnwidth]{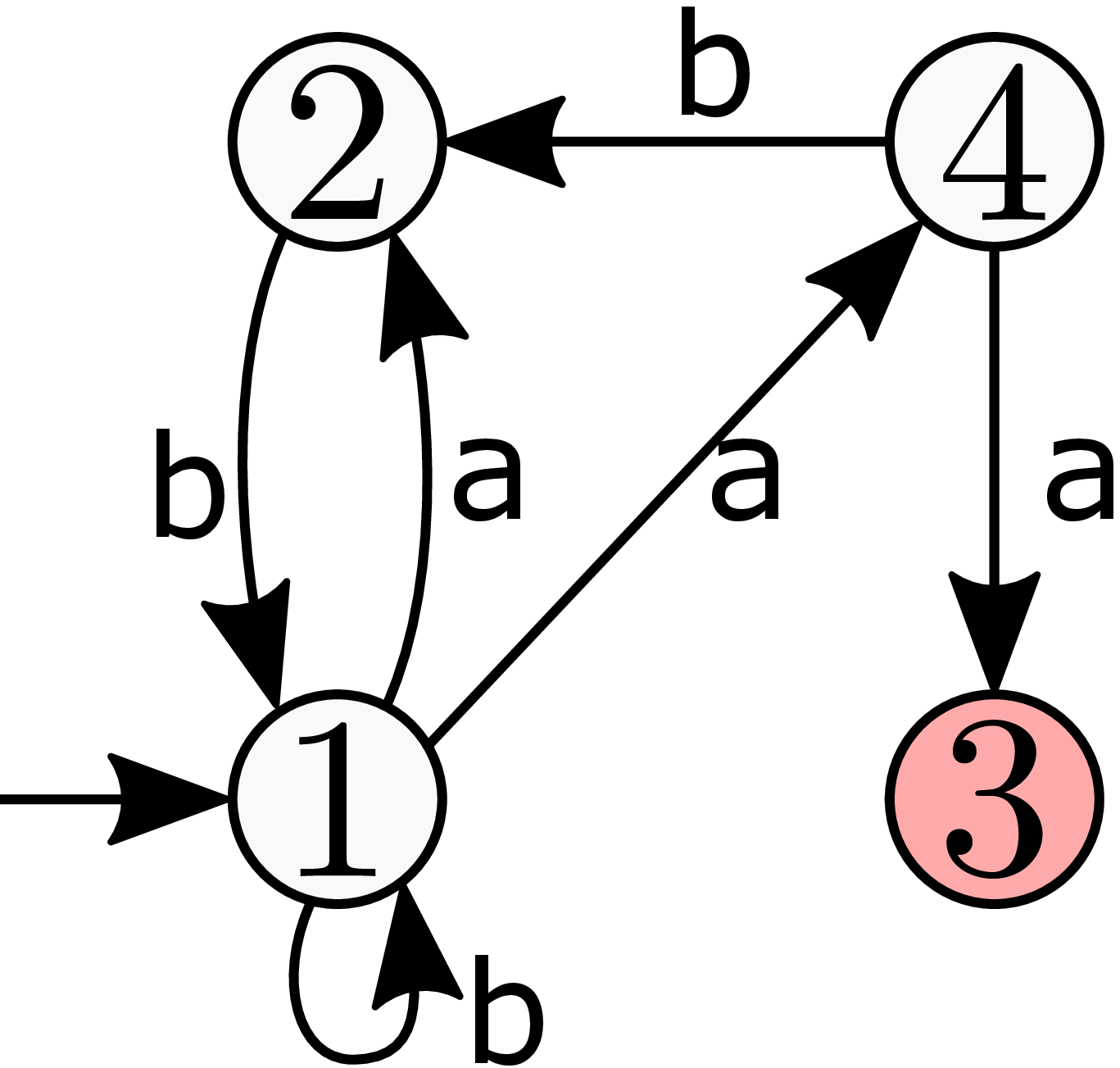}
\label{fig:T-cx-inv-ctr}
}
\qquad
\subfloat[LTS $T_{d_1}$]  
{  
\centering
\includegraphics[width=0.16\columnwidth]{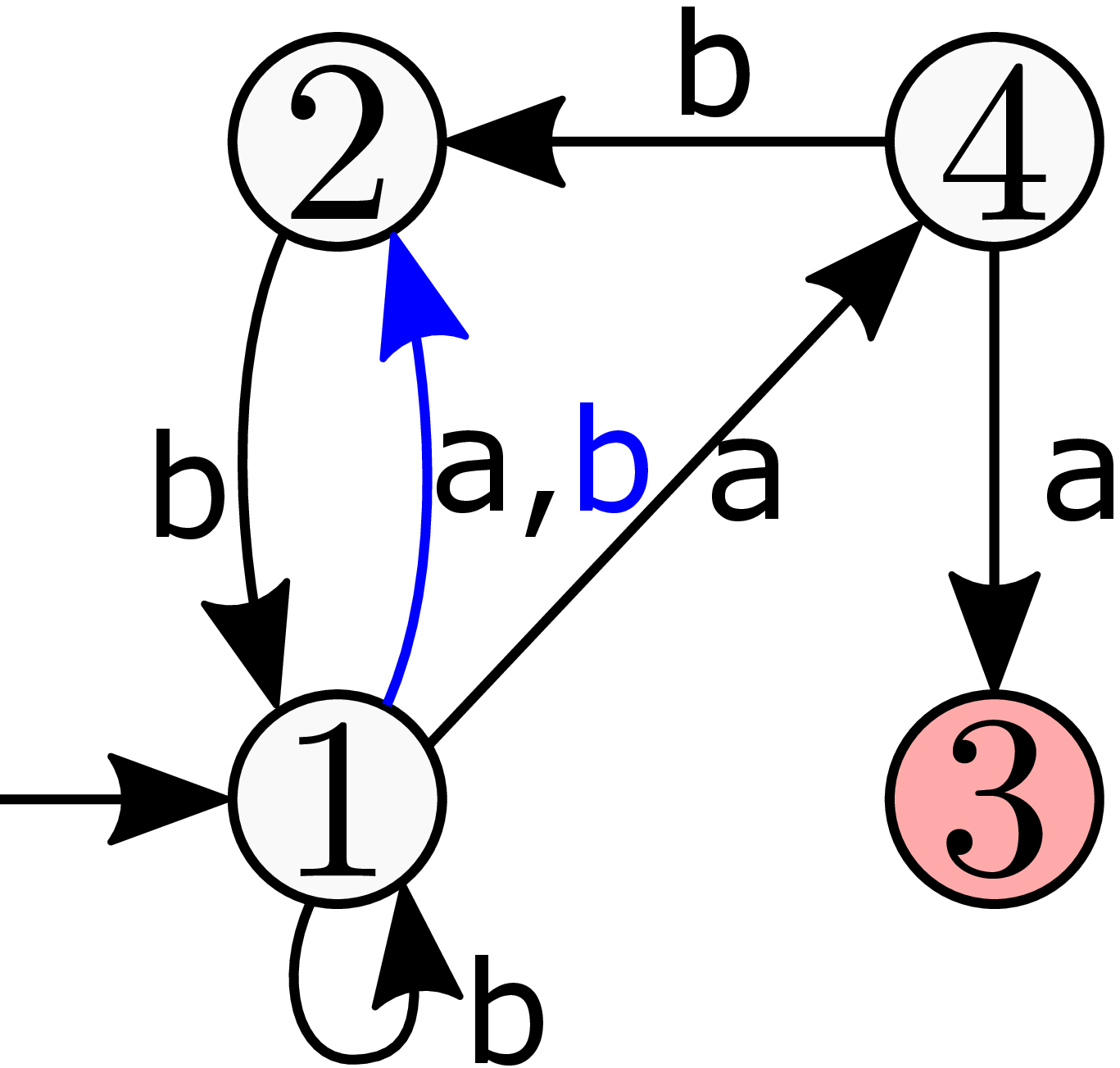}
\label{fig:T-cx-d1}
}
\qquad
\subfloat[LTS $T_{d_3}$]  
{  
\centering
\includegraphics[width=0.16\columnwidth]{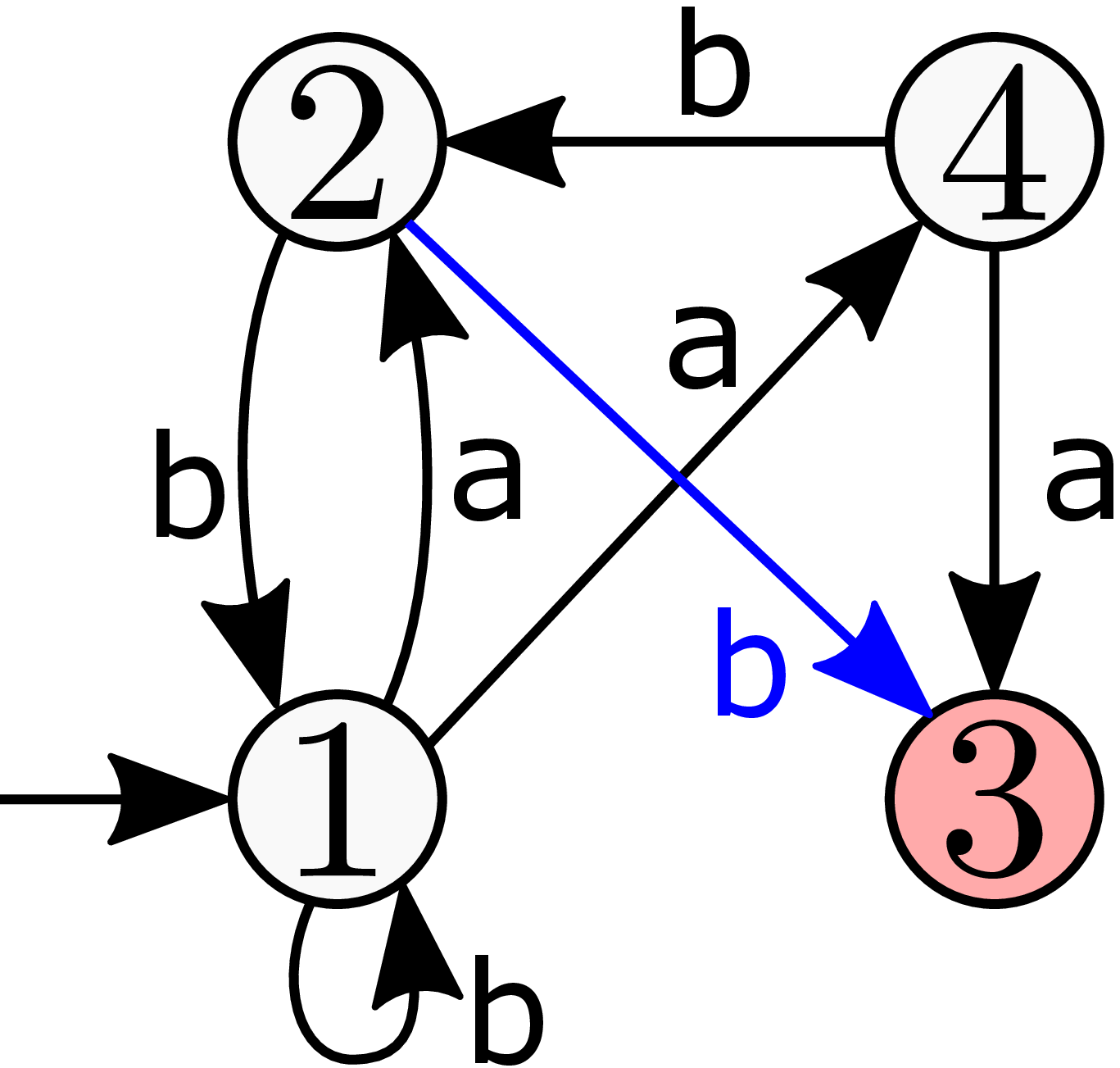}
\label{fig:T-cx-d3}
}
\caption{Tolerable perturbations}
\end{figure}
\end{example}

\vspace{-.6cm}
In Example~\ref{example:perturbation-relation}, perturbations $d_3$ and $d_4$ are incomparable with respect to $\preceq$ even though $Runs(T_{d_3}|f) = Runs(T_{d_4}|f)$.
Fortunately, their union, $d_3\cup d_4$, generates the same controlled runs and it is at least as powerful as $d_3$ and $d_4$.
This result establishes the existence of a maximal perturbation within the set of perturbations that generate the same controlled runs.

\begin{restatable}{proposition}{propmaxdev}
Given LTS $T$, controller $f$, and perturbations $d_1,d_2$ such that $Runs(T_{d_1}|f) = Runs(T_{d_2}|f)$, it follows that $d_1,d_2\preceq_f d_1\cup d_2$ and $Runs(T_{d_1\cup d_2}|f) = Runs(T_{d_1}|f) = Runs(T_{d_2}|f)$.
\end{restatable}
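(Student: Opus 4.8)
The plan is to prove the run-equality claim (the second conclusion) first, and then obtain the ordering claim (the first conclusion) as an immediate corollary via condition (ii) of Definition~\ref{def:dev-order}. For the equality $Runs(T_{d_1\cup d_2}|f) = Runs(T_{d_1}|f) = Runs(T_{d_2}|f)$, one direction comes for free: since $d_1 \subseteq d_1\cup d_2$ and $d_2 \subseteq d_1\cup d_2$, the monotonicity observation stated just before the proposition gives $Runs(T_{d_1}|f) \subseteq Runs(T_{d_1\cup d_2}|f)$ and $Runs(T_{d_2}|f) \subseteq Runs(T_{d_1\cup d_2}|f)$. Combined with the hypothesis $Runs(T_{d_1}|f)=Runs(T_{d_2}|f)$, the entire content of the proposition reduces to the reverse inclusion $Runs(T_{d_1\cup d_2}|f) \subseteq Runs(T_{d_1}|f)$, which I expect to be the main obstacle.

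I would establish this reverse inclusion by induction on the length of finite controlled run prefixes, relying on the fact that $Runs(T|f)$ is prefix-closed (a finite controlled run need not be maximal), so the hypothesis $Runs(T_{d_1}|f)=Runs(T_{d_2}|f)$ transfers verbatim to their finite prefixes. Fix a run $\rho = x_0 a_0 x_1 a_1 \cdots \in Runs(T_{d_1\cup d_2}|f)$ and write $\pi_n = x_0 a_0 \cdots x_n$ for its prefixes; the base case $\pi_0 = x_0 \in I$ is immediate for $T_{d_1}|f$. For the inductive step, assume $\pi_n \in Runs(T_{d_1}|f)$ and consider the transition $(x_n,a_n,x_{n+1}) \in R\cup d_1\cup d_2$ with $a_n \in f(x_{0\dots n})$. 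If this transition lies in $R\cup d_1$, then $\pi_{n+1}$ is directly a controlled run prefix of $T_{d_1}|f$ and we are done. The delicate case is when it lies in $d_2\setminus(R\cup d_1)$: here I invoke the equality hypothesis to note that $\pi_n \in Runs(T_{d_1}|f)=Runs(T_{d_2}|f)$, so $\pi_n$ is also a controlled run prefix of $T_{d_2}|f$; since $a_n \in f(x_{0\dots n})$ and $(x_n,a_n,x_{n+1}) \in d_2 \subseteq R\cup d_2$, the extension $\pi_{n+1}$ is a controlled run prefix of $T_{d_2}|f$, and applying the equality once more pulls it back to $Runs(T_{d_1}|f)$. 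This ``ping-pong'' between the $T_{d_1}$ and $T_{d_2}$ views is the crux, and it is exactly where the hypothesis that the two perturbations generate the same controlled runs is used. Thus every finite prefix of $\rho$ lies in $Runs(T_{d_1}|f)$.

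It then remains to lift the conclusion from all finite prefixes to the (possibly infinite) run $\rho$ itself. Since every defining condition of a controlled run --- membership of $x_0$ in $I$, the requirement that each $(x_i,a_i,x_{i+1})$ belong to the transition relation, and the control constraint $a_i \in f(x_{0\dots i})$ --- is local to a finite prefix, a sequence is a controlled run precisely when all of its finite prefixes are controlled run prefixes. Hence $\rho \in Runs(T_{d_1}|f)$, which completes the reverse inclusion and hence the second conclusion. For the first conclusion, the equality just established rules out condition (i) of Definition~\ref{def:dev-order} (the containment is not strict), but condition (ii) applies: the controlled run sets of $T_{d_1}|f$ and $T_{d_1\cup d_2}|f$ coincide and $d_1 \subseteq d_1\cup d_2$, so $d_1 \preceq_f d_1\cup d_2$; the argument for $d_2 \preceq_f d_1\cup d_2$ is identical.
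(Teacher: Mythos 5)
Your proof is correct, but it takes a different route from the paper's. The paper argues at the level of \emph{transitions}: it assumes w.l.o.g.\ that $d_1\cap R=d_2\cap R=\emptyset$, defines the set $d_i^{act}$ of ``active'' transitions of $d_i$ (those actually occurring at the end of some run of $T_{d_i}|f$), shows $d_1^{act}=d_2^{act}$ from the run-set equality, and then decomposes $d_1\cup d_2$ into this common active core plus inactive remainders to conclude. You instead argue at the level of \emph{runs}: after reducing everything to the single inclusion $Runs(T_{d_1\cup d_2}|f)\subseteq Runs(T_{d_1}|f)$ via monotonicity, you prove it by induction on prefixes with a ping-pong between the $d_1$- and $d_2$-views, and then lift to infinite runs using the prefix-local character of the controlled-run conditions. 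The two arguments rest on the same underlying insight (any $d_2$-transition usable in the union system is already usable in $T_{d_2}|f$, because its source prefix is a common run), but your induction makes explicit precisely the step the paper leaves terse --- namely, why transitions of $d_2$ that are inactive in $T_{d_2}|f$ cannot become active in $T_{d_1\cup d_2}|f$ through newly reachable prefixes. In that sense your version is somewhat more self-contained and also avoids the paper's w.l.o.g.\ normalization, while the paper's active-transition decomposition yields the slightly stronger structural picture that the ``used'' parts of $d_1$ and $d_2$ literally coincide as sets. Your final appeal to condition (ii) of Definition~\ref{def:dev-order} to get $d_1,d_2\preceq_f d_1\cup d_2$ is also sound.
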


\subsection{Tolerance definition}

Prior work on robustness of discrete transition systems assumes that perturbation $d$ is given and one checks the tolerance of controller $f$ with respect to $d$ and property $P$.
Our approach transforms the assumption of given perturbations into our object of study.
Intuitively, we search for all possible tolerable perturbations $d$ with respect to LTS $T$, controller $f$, and property $P$.

\begin{definition} \label{def:tolerance}
Let LTS $T$, property $P$, and controller $f$ such that $T|f\models P$ be given.
The tolerance of $f$ with respect to $P$ and $T$, denoted as $\Delta(T,f,P)$, is a collection of perturbations ($\Delta(T,f,P)\subseteq 2^{Q\times Act\times Q}$) such that:
\begin{enumerate}
\item $\forall d \in \Delta(T,f,P).\ T_{d}|f\models P$  \emph{[$d$ is tolerable]};
\item $\forall d\subseteq Q\times Act\times Q .\ T_{d}|f\models P\Rightarrow\exists d'\in \Delta(T,f,P).\ d\preceq d'$ \emph{[$d$ is represented]}; 
	\item $\forall d,d'\in \Delta(T,f,P).\ d\neq d'\Rightarrow d \not\preceq d'$ \emph{[unique representations]}.
\end{enumerate}
\end{definition}

Conditions 2 and 3 in Def.~\ref{def:tolerance} enforce that only maximal tolerable perturbations with respect to $\preceq$ are in $\Delta$.
Formally, the set $\Delta$ defines an antichain, with respected to $\preceq$, of maximal tolerable perturbations.
Intuitively, the set $\Delta$ defines an upper bound on the possible perturbations from $T$ that controller $f$ tolerates.
Before we dive into the properties of the set $\Delta$, we must show that this set is uniquely defined given its assumptions.

\begin{restatable}{lemma}{lemmatoleranceuniqueness}\label{lemma:tolerance_uniqueness}
Given an LTS $T$, controller $f$, and property $P$, there is a unique $\Delta(T,f,P)$ that satisfies the conditions in Def.~\ref{def:tolerance}. 
\end{restatable}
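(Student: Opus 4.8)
The plan is to prove the two halves of the statement separately: that at least one collection satisfying the three conditions of Definition~\ref{def:tolerance} exists, and that any two such collections must coincide. Throughout I rely on the fact, noted after Definition~\ref{def:dev-order}, that $\preceq_f$ (written $\preceq$) is a partial order on the set of perturbations of $T$; in particular it is reflexive, transitive, and antisymmetric. I also use the crucial structural fact that, since $Q$ and $Act$ are finite, the set of all perturbations $2^{Q\times Act\times Q}$ is finite, so $(2^{Q\times Act\times Q},\preceq)$ is a finite poset.

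For existence, let $\mathcal{T} = \{d \subseteq Q\times Act\times Q \mid T_d|f \models P\}$ be the set of tolerable perturbations, and let $\Delta^{\star}$ be the set of $\preceq$-maximal elements of $\mathcal{T}$. Condition~1 holds by construction. Condition~3 (the antichain property) holds because if $d,d' \in \Delta^{\star}$ with $d \preceq d'$, then maximality of $d$ in $\mathcal{T}$ forbids $d \prec d'$, forcing $d = d'$. For Condition~2 I use finiteness: given any tolerable $d$, the set $\{e \in \mathcal{T} \mid d \preceq e\}$ is finite and nonempty, hence has a $\preceq$-maximal element $d^{\star}$, which is in fact maximal in all of $\mathcal{T}$ (any strictly larger tolerable $e$ would, by transitivity, also dominate $d$ and contradict maximality), so $d^{\star} \in \Delta^{\star}$ and $d \preceq d^{\star}$.

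The heart of the argument is uniqueness, and this is where I would be most careful. Suppose $\Delta_1$ and $\Delta_2$ both satisfy Conditions~1--3; I show $\Delta_1 \subseteq \Delta_2$, and the reverse inclusion follows by symmetry. Fix $d_1 \in \Delta_1$. By Condition~1 it is tolerable, so by Condition~2 applied to $\Delta_2$ there is $d_2 \in \Delta_2$ with $d_1 \preceq d_2$. Again by Condition~1, $d_2$ is tolerable, so Condition~2 applied to $\Delta_1$ yields $d_1' \in \Delta_1$ with $d_2 \preceq d_1'$. Transitivity gives $d_1 \preceq d_1'$, and since $\Delta_1$ is an antichain (Condition~3, read contrapositively as $d \preceq d' \Rightarrow d = d'$ inside $\Delta_1$) we obtain $d_1 = d_1'$. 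Hence $d_2 \preceq d_1$, and combined with $d_1 \preceq d_2$, antisymmetry of $\preceq$ forces $d_1 = d_2 \in \Delta_2$.

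The step I expect to be the main obstacle is not any single deduction but making sure the partial-order properties (especially antisymmetry, which arises from clause~(ii) of Definition~\ref{def:dev-order}) are genuinely available, since the back-and-forth collapse of $d_1 \preceq d_2 \preceq d_1' = d_1$ hinges on them; and, on the existence side, making explicit that Condition~2 really does demand a \emph{maximal} dominator, which is guaranteed here only by the finiteness of the perturbation set rather than by the order structure alone.
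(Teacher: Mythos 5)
Your proof is correct, and the uniqueness half is essentially the paper's own argument---the same back-and-forth between Conditions~2 and~3 using transitivity and antisymmetry of $\preceq$---merely cast as a direct proof of $\Delta_1 \subseteq \Delta_2$ rather than a contradiction. The existence half you add (maximal elements of the finite poset of tolerable perturbations) is sound and is in fact something the paper's proof silently omits, since it only argues that two distinct $\Delta$'s cannot coexist.
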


\begin{example}\label{example:tolerance}
Consider the same setup as in Example~\ref{example:perturbation-relation}.
The four perturbations in Example~\ref{example:perturbation-relation} are tolerable.
Therefore, they must be represented in $\Delta$ as stated in condition~(2) in the definition of $\Delta$.
At this moment, we simply provide $\Delta$ for this example and in Section~\ref{sect:tolerance-invariance} we provide the formal results on efficiently obtaining this $\Delta$.
The set $\Delta$ in this example is given by $\Delta = \{Q\times Act \times Q\setminus \{(1,b,3),(2,b,3),(4,b,3)\}\}$.
Intuitively, $\Delta$ is defined by a single perturbation set that contains all possible transitions except the ones from states $1,2,4$ to state $3$ with action $b$.
Adding any of these missing transitions make the perturbation set in $\Delta$ to be intolerable.
The perturbed system defined by this perturbation is depicted in Fig.~\ref{fig:T-cx-inv-ctr3} where we highlight the new transitions in blue \footnote{For simplicity, we do not show the transitions starting in state $3$}.
Any other tolerable perturbation is represented in $\Delta$.
For example, perturbations $d_1,d_2 \subseteq d = Q\times Act \times Q\setminus \{(1,b,3),(2,b,3),(4,b,3)\}$ which implies that $d_1,d_2\preceq_f d$.  
And although $d_3$ and $d_4$ are not subsets of $d$, it also follows that $d_3,d_4\preceq d$ since $Runs(T_{d_3}|f)=Runs(T_{d_4}|f)\subset Runs(T_d|f)$.
\end{example}
\begin{figure}[!h]
\centering  
\subfloat[LTS $T_{d}$ for $d\in \Delta$]  
{  
\centering
\includegraphics[width=0.19\columnwidth]{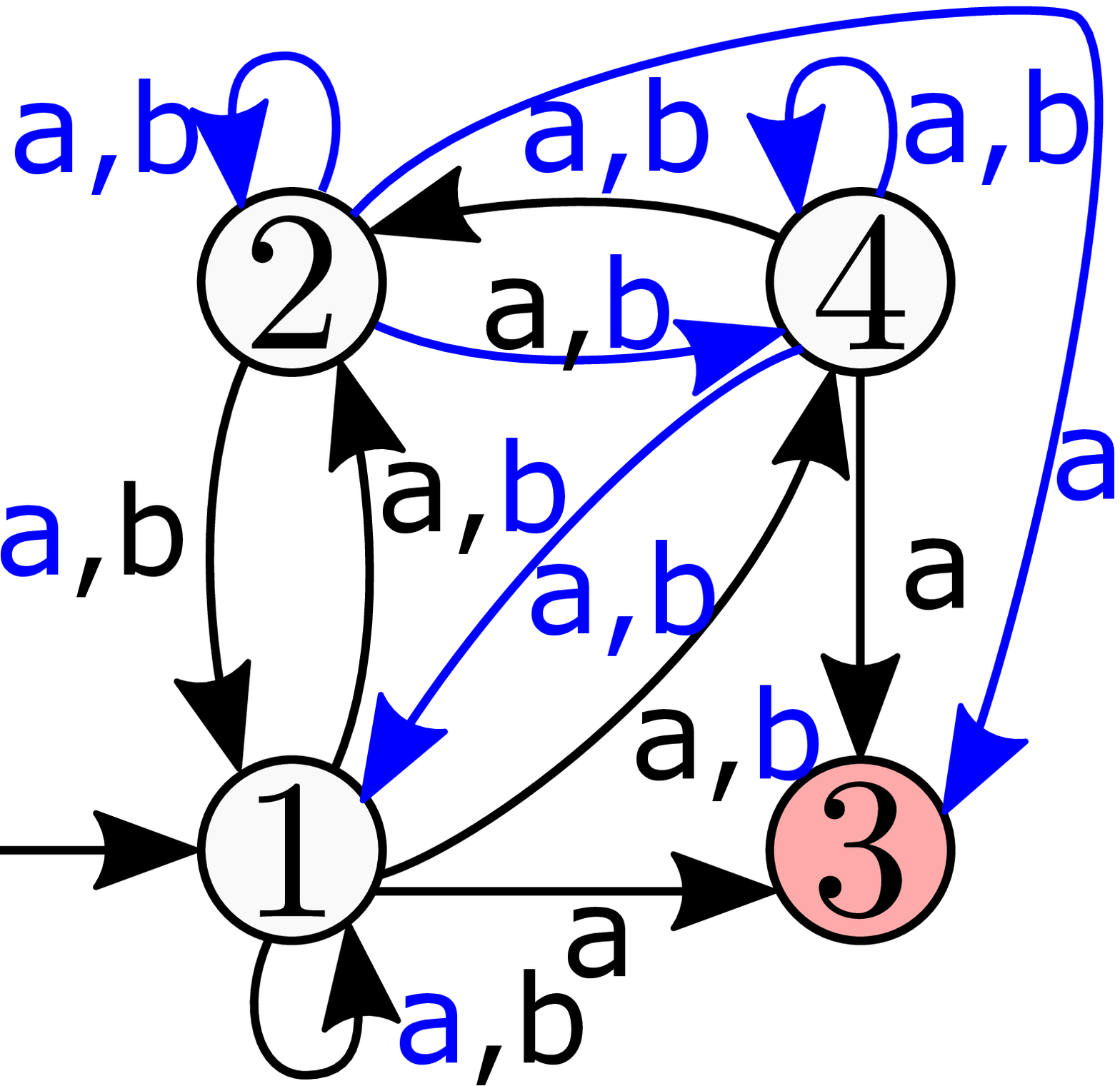}
\label{fig:T-cx-inv-ctr3}
}
\qquad
\subfloat[LTS $T_{d_1\cup d_2}$]  
{  
\centering
\includegraphics[width=0.17\columnwidth]{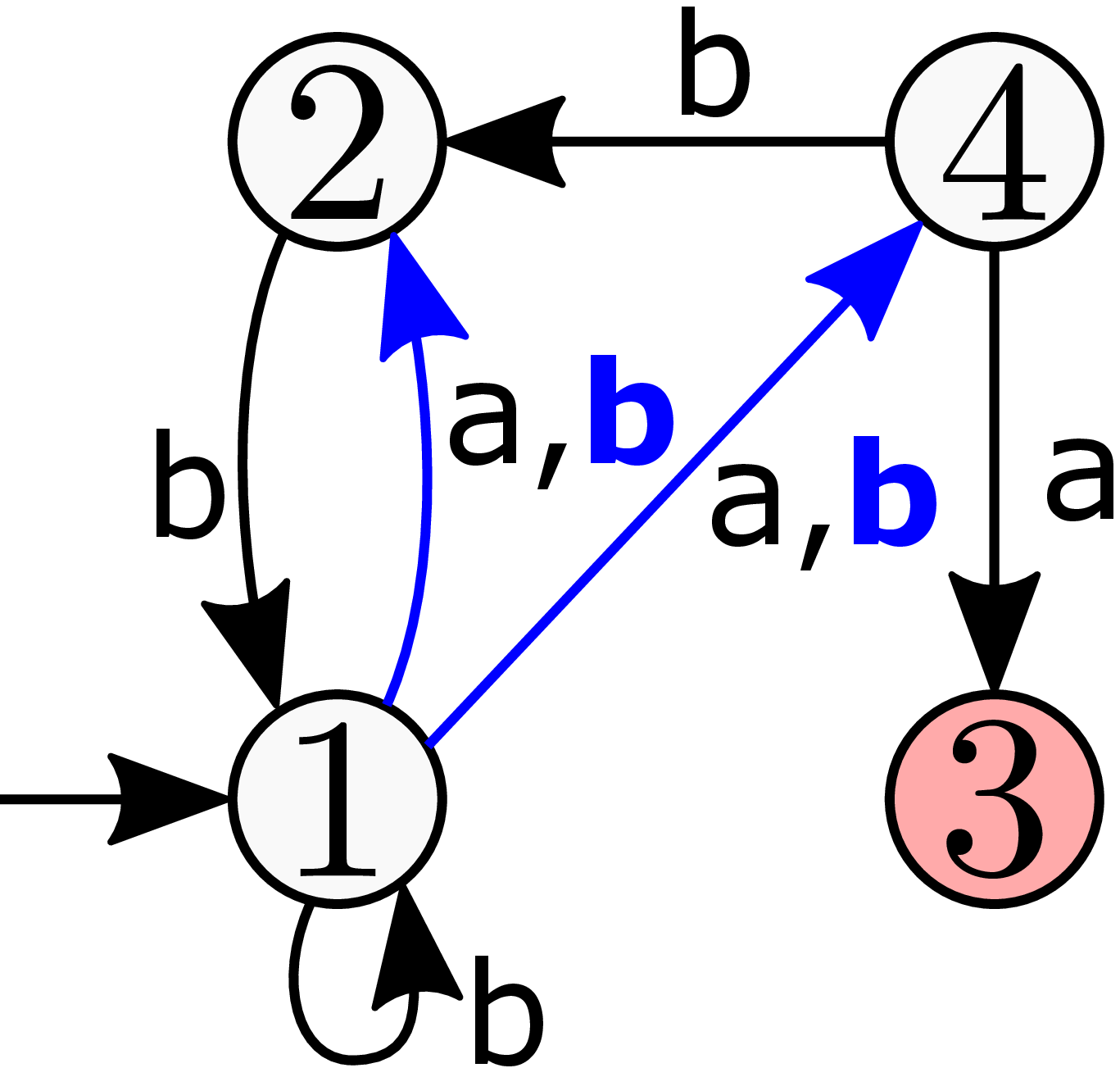}
\label{fig:T-cx-inv-ctr2}
}
\caption{LTS in Examples~\ref{example:tolerance} and~\ref{example:cx-inv-ctr}}
\end{figure}
\vspace{-1cm}
\subsection{Computing tolerance for general properties}

 The tolerance of controller $f$ is defined by the set of maximal tolerable perturbations with respect to property $P$.
 The first problem we investigate is to compute the set $\Delta$ given $T$, $f$, and $P$.

\begin{problem}\label{prob:comp-tol}
Given LTS $T$, property $P$, and controller $f$, compute $\Delta(T,f,P)$.
\end{problem}

Naively, solving Problem~\ref{prob:comp-tol} can be broken into (i) finding the set of tolerable perturbations and (ii) identifying the maximal ones within this set.
Step (i) can be reduced to verifying if system $T_{d}|f$ satisfies property $P$ for every possible perturbation $d$.
Step (ii) orders the tolerable perturbations with respect to relation $\preceq_f$.
Under mild assumptions that $P$ is a regular language and that $f$ has bounded memory, both steps (i) and (ii) are decidable.

Although this naive algorithm computes the tolerance of $f$, it will not scale for large LTS.
For this reason, we wish to investigate efficient ways to compute the set $\Delta(T,f,P)$.
Our goal is to investigate tolerance with respect to special classes of properties, e.g., invariance, safety, liveness, etc.
In the next section, we show our results for invariance properties.

\section{Tolerance with respect to invariance properties}\label{sect:tolerance-invariance}

An \emph{invariance property} $P$ for an LTS $T$ can be represented by a subset of invariant states $Q_{inv}\subseteq Q$ \cite{Baier:2008}.
Formally, a property $P$ is an invariance property if there exists an invariant set of states $Q_{inv}$ such that $P = Q_{inv}^{*}\cup Q_{inv}^{\omega}$.
For instance, $Q_{inv} = \{1,2,4\}$ in Example~\ref{example:perturbation-relation}.
An LTS satisfies an invariance property if and only if the LTS only reaches states in $Q_{inv}$ \cite{Baier:2008}.
For convenience, we assume that the invariant set of states always contains the set of initial states.

\subsection{Supremum tolerable perturbation} 
Usually when dealing with invariance properties, one can show the existence of a single supremum element that satisfies the desired investigated property.
In our scenario, we want to show that the tolerance of $f$ with respect to an invariance property is represented by a unique tolerable perturbation, i.e., $|\Delta(T,f,P)| = 1$.
Although $\Delta$ in Example~\ref{example:tolerance} has a single element, the following counterexample illustrates that in general $|\Delta(T,f,P)| \geq 1$. 

\begin{example}\label{example:cx-inv-ctr}
Consider the setup of Example~\ref{example:perturbation-relation} with LTS defined by Fig.~\ref{fig:T-cx-inv-ctr} and $Q_{inv} = \{1,2,4\}$, but under control of the following controller: $f(1214)=\{a\}$ and $f(x_{0\dots n})= \{b\}$ for any $x_{0\dots n} \in Q^+$ other than $1214$.
Perturbations $d_1 = \{(1,b,2)\}$ and $d_2 = \{(1,b,4)\}$ remain tolerable with respect to this new controller.
And although these perturbations are tolerable, their union is not tolerable since path $1214$ becomes feasible in $T_{d_1\cup d_2}$ as seen in Fig.~\ref{fig:T-cx-inv-ctr2}.
The size of $\Delta(T,f,P)$ must be at least two since we cannot combine $d_1$ and $d_2$ as a single tolerable perturbation that generates the behavior of $T_{d_1}|f$ and $T_{d_2}|f$.
\end{example}

\subsubsection{Invariant controllers}
The counterexample in Example~\ref{example:cx-inv-ctr} sheds light on the problem of the controller $f$ selecting ``bad'' control decisions for paths outside of $Paths_{fin}(T|f)$.
This problem can be easily fixed for invariance properties by introducing the notion of \emph{invariant control actions} and \emph{invariant controllers}.
\begin{definition}\label{def:inv-actions}
Let an LTS $T$ and an invariance property $P$ with invariant set of states $Q_{inv}$ be given. 
The set of \emph{invariant control actions} is defined as $A_{inv}(q) := \{a\in Act\mid Post_T(q,a)\subseteq Q_{inv}\}$ if $q\in Q_{inv}$ and $A_{inv}(q) := \emptyset$ if $q\notin Q_{inv}$.
Moreover, we say that $f$ is an \emph{invariant controller} with respect to $T$ and $P$ if $f(x_{0\dots n}) \subseteq A_{inv}(x_n)$ for any sequence $x_{0\dots n}\in Q^+$.
\end{definition}

Informally, invariant control actions characterize the ``good'' actions with respect to LTS $T$ and invariance property $P$.
Therefore, all invariant controllers satisfy invariance property $P$ as stated in Lemma~\ref{lemma:inv-ctr}.

\begin{restatable}{lemma}{leminvctr}\label{lemma:inv-ctr}
Any invariant controller with respect to $T$ and $P$ satisfies $T|f\models P$.
\end{restatable}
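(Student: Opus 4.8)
The plan is to prove $Paths(T|f) \subseteq P$ directly, by showing that every controlled path stays inside the invariant set $Q_{inv}$. Since $P = Q_{inv}^{*}\cup Q_{inv}^{\omega}$, a (finite or infinite) path lies in $P$ if and only if each of its states belongs to $Q_{inv}$; so it suffices to fix an arbitrary controlled run $x_0a_0x_1a_1\dots \in Runs(T|f)$ and argue by induction on the index $i$ that $x_i \in Q_{inv}$. For the base case I would invoke the standing assumption, stated just before Def.~\ref{def:inv-actions}, that the invariant set contains the initial states, i.e.\ $I \subseteq Q_{inv}$; since every run begins in an initial state, $x_0 \in I \subseteq Q_{inv}$.

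For the inductive step, assume $x_i \in Q_{inv}$. Because $f$ is an invariant controller, the action chosen along the run satisfies $a_i \in f(x_{0\dots i}) \subseteq A_{inv}(x_i)$. Here the inductive hypothesis $x_i \in Q_{inv}$ is exactly what I need: it places us in the first branch of Def.~\ref{def:inv-actions}, so that $A_{inv}(x_i) = \{a \in Act \mid Post_T(x_i,a) \subseteq Q_{inv}\}$ rather than $\emptyset$. Hence $a_i \in A_{inv}(x_i)$ gives $Post_T(x_i,a_i) \subseteq Q_{inv}$, and since $x_{i+1} \in Post_T(x_i,a_i)$ in any run, we conclude $x_{i+1} \in Q_{inv}$, closing the induction. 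Having shown every state of every controlled run lies in $Q_{inv}$, each associated finite and infinite path belongs to $Q_{inv}^{*}\cup Q_{inv}^{\omega} = P$, whence $Paths(T|f) \subseteq P$ and therefore $T|f \models P$.

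I do not expect a genuine obstacle here; the argument is a routine invariant-maintenance induction. The only points that require care are, first, making explicit how the case split in the definition of $A_{inv}$ is used, namely that the hypothesis $x_i \in Q_{inv}$ is precisely what prevents $A_{inv}(x_i)$ from being empty and what delivers the containment $Post_T(x_i,a_i)\subseteq Q_{inv}$; and second, the treatment of infinite paths, since $T|f\models P$ quantifies over all paths, including infinite ones. The latter causes no difficulty because membership in $Q_{inv}^{\omega}$ is equivalent to each state, at every finite index, lying in $Q_{inv}$, which is exactly what the per-index induction establishes.
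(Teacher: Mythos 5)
Your proof is correct and follows the same route the paper intends: the paper's proof is just the one-line remark that the claim ``directly follows from the definition of invariant controllers,'' and your invariant-maintenance induction is precisely the spelled-out version of that argument. No gaps.
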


Although it seems that invariant controllers are restrictive, their actions only assume LTS $T$ and invariance property $P$.
Thus, one expects that all actions the controller takes satisfy the invariant control actions constraint.

\subsubsection*{Tolerance of invariant controllers}

Under the assumption of invariant controllers, the tolerance of a given controller $f$ is completely defined by a \emph{unique tolerable perturbation}, i.e., $|\Delta(T,f,T)| = 1 $ for any invariant controller $f$.
We formalize this statement in the following theorem.

\begin{restatable}{theorem}{theoinvunique}\label{theo:inv-unique}
Let LTS $T$, invariance property $P$ with invariant set of states $Q_{inv}$, and invariant controller $f$ be given.
It follows that $|\Delta(T,f,P)| = 1$ and its unique element is defined as 
$$\lceil f \rceil:= (Q\times Act \times Q)\setminus \{(q,a,q')\in Q_{inv}\times Act\times Q\setminus Q_{inv}\mid a\in F(q)\}$$
where $F(q) :=\{a\in Act\mid \exists x_{0\dots n}\in Paths_{fin}(T_{\Omega}|f).\ a \in f(x_{0\dots n})\wedge q=x_n\}$ and where $\Omega := Q_{inv}\times Act \times Q_{inv}$.
\end{restatable}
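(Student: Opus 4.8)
The plan is to verify that the singleton $\{\lceil f\rceil\}$ meets the three requirements of Def.~\ref{def:tolerance} and then appeal to Lemma~\ref{lemma:tolerance_uniqueness} to conclude $\Delta(T,f,P)=\{\lceil f\rceil\}$, hence $|\Delta(T,f,P)|=1$. The engine of the whole argument will be the single structural identity
$$Runs(T_{\lceil f\rceil}|f) = Runs(T_\Omega|f),$$
which says that closing the loop with the invariant controller $f$ collapses the huge perturbation $\lceil f\rceil$ onto the much smaller ``stay-inside-$Q_{inv}$'' perturbation $\Omega=Q_{inv}\times Act\times Q_{inv}$. Once this identity is in hand, both tolerability and representation fall out.

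I would establish the identity by induction on run length, proving that every finite controlled run of $T_{\lceil f\rceil}|f$ (i) keeps all its states in $Q_{inv}$ and (ii) is also a controlled run of $T_\Omega|f$. The base case is immediate since $I\subseteq Q_{inv}$. For the step, suppose the run $x_0a_0\cdots x_k$ lies in $T_\Omega|f$ with $x_k\in Q_{inv}$, and let $a_k\in f(x_{0\dots k})$. By the inductive hypothesis $x_{0\dots k}\in Paths_{fin}(T_\Omega|f)$, so by definition $a_k\in F(x_k)$; and because $f$ is an invariant controller, $a_k\in \Ainv(x_k)$, i.e.\ $Post_T(x_k,a_k)\subseteq Q_{inv}$. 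The decisive computation is that these two facts make the successor sets of $(x_k,a_k)$ coincide in both perturbed systems: since $a_k\in F(x_k)$, the set $\lceil f\rceil$ deletes exactly the transitions $(x_k,a_k,q')$ with $q'\notin Q_{inv}$, so $Post_{T_{\lceil f\rceil}}(x_k,a_k)=Post_T(x_k,a_k)\cup Q_{inv}=Q_{inv}=Post_{T_\Omega}(x_k,a_k)$. Hence every admissible next state is in $Q_{inv}$ and the extended run lies in $T_\Omega|f$, closing the induction; the reverse inclusion is free from $\Omega\subseteq\lceil f\rceil$ (the deleted transitions all leave $Q_{inv}$, whereas $\Omega$ never does) together with the monotonicity of $Runs(T_\bullet|f)$ noted before Def.~\ref{def:dev-order}. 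Since all runs stay in $Q_{inv}$, every path lies in $Q_{inv}^*\cup Q_{inv}^\omega=P$, giving $T_{\lceil f\rceil}|f\models P$ --- condition~1.

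For condition~2, let $d$ be any tolerable perturbation, so $T_d|f\models P$ and all its paths stay in $Q_{inv}$. A short induction like the one above --- using that from any invariant state $\Omega$ already supplies every invariant successor --- shows $Runs(T_d|f)\subseteq Runs(T_\Omega|f)=Runs(T_{\lceil f\rceil}|f)$. If this inclusion is strict, clause (i) of Def.~\ref{def:dev-order} gives $d\preceq_f\lceil f\rceil$. If instead $Runs(T_d|f)=Runs(T_{\lceil f\rceil}|f)$, I must verify clause (ii), i.e.\ $d\subseteq\lceil f\rceil$. Arguing by contradiction, a transition $(q,a,q')\in d\setminus\lceil f\rceil$ must, by the form of $\lceil f\rceil$, satisfy $q\in Q_{inv}$, $q'\notin Q_{inv}$, and $a\in F(q)$. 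Unwinding $a\in F(q)$ yields a finite path in $T_\Omega|f$ ending at $q$ on which $f$ permits $a$; by the assumed equality of run sets this same path belongs to $Paths_{fin}(T_d|f)$, and extending it with the $d$-transition $(q,a,q')$ produces a controlled path of $T_d|f$ reaching $q'\notin Q_{inv}$ --- contradicting $T_d|f\models P$. Thus $d\subseteq\lceil f\rceil$, so $d\preceq_f\lceil f\rceil$ in all cases.

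Conditions~1 and~2 show $\{\lceil f\rceil\}$ is a collection of tolerable, universally representing perturbations, and condition~3 (the antichain property) holds vacuously for a singleton; Lemma~\ref{lemma:tolerance_uniqueness} then forces $\Delta(T,f,P)=\{\lceil f\rceil\}$. I expect the main obstacle to be the ``equal runs'' case of condition~2: one has to transport the witnessing path out of the definition of $F(q)$ (which lives in $T_\Omega|f$) into $T_d|f$ via the run-set equality, and then check that a lone bad transition is actually fired by the controller and escapes $Q_{inv}$. The induction behind the core identity is routine once the observation ``$a_k\in F(x_k)\Rightarrow$ the two successor sets equal $Q_{inv}$'' is isolated.
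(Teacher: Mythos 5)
Your proof is correct, but it is organized differently from the paper's. The paper splits the argument into two contradiction proofs: first it shows $|\Delta(T,f,P)|=1$ \emph{abstractly}, by taking two supposed elements $d_1,d_2\in\Delta$, projecting each onto its ``inv'' part $d_i^{inv}\subseteq Q_{inv}\times Act\times Q_{inv}$ (which preserves the controlled runs, since a tolerable perturbation can only fire transitions inside $Q_{inv}$), observing that $d_1^{inv}\cup d_2^{inv}$ is again tolerable, and deriving a violation of the antichain condition from how that union must be represented; only then, in a second contradiction argument, does it identify the unique element as $\tol{f}$ using $Runs(T_{\tol{f}}|f)=Runs(T_\Omega|f)$. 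You instead verify directly that the singleton $\{\tol{f}\}$ satisfies all three clauses of Def.~\ref{def:tolerance} and invoke Lemma~\ref{lemma:tolerance_uniqueness}, so cardinality one comes for free. Both arguments pivot on the same structural facts --- that $Runs(T_\Omega|f)$ is the maximal controlled behavior any tolerable perturbation can produce, and that $\tol{f}$ reproduces exactly it --- but your route is more constructive and makes explicit two steps the paper leaves implicit: the induction showing $Post_{T_{\tol{f}}}(x_k,a_k)=Post_{T_\Omega}(x_k,a_k)=Q_{inv}$ whenever $a_k\in F(x_k)$ and $f$ is invariant, and the transport of the witnessing path from $T_\Omega|f$ into $T_d|f$ in the equal-runs case of clause (ii), which is exactly the point where the paper's corresponding claim (``any transition in $d$ not in $\tol{f}$ implies $d$ is not tolerable'') gets its justification. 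The paper's decomposition buys a standalone statement that $\Delta$ is a singleton for any invariant controller before committing to a formula for it; yours buys a shorter, self-contained verification with fewer case distinctions.
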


The first part of Theorem~\ref{theo:inv-unique} states that the tolerance of $f$ has a \emph{single perturbation}, i.e., there exists a supremal element within the set of tolerable perturbations with respect to $\preceq_f$.
The second part of this theorem characterizes this unique perturbation, i.e., $\lceil f \rceil$.
This element is defined by removing transitions that are not tolerable from the set of all possible transitions.
For this reason, the removed transitions are from states in $Q_{inv}$ to states outside of $Q_{inv}$.

Discussing the set difference in more detail, the function $F(q)$ restricts attention to paths in $T_{\Omega}|f$.
Recall that relation $\preceq_f$ prioritizes the behavior generated by a perturbed controlled system, i.e., $T_d|f$.
The tolerable perturbation $\Omega$ is selected since it can make every states in the the invariant set reachable, i.e., more behavior can be generated. 
Next, we investigate which actions the controller uses in the invariant states reached in $T_{\Omega}|f$.
Intuitively, if the controller uses action $a$ in a reachable invariant state $q$, then transitions $\{q\}\times \{a\}\times Q\setminus Q_{inv}$ are not tolerable and removed from $\tol{f}$.

\begin{example}\label{example:tol-inv}
We return to Example~\ref{example:tolerance} to discuss Theorem~\ref{theo:inv-unique}.
The LTS $T$ is depicted in Fig.~\ref{fig:T-cx-inv-ctr}, the invariance property $P$ is defined by the set $Q_{inv} = \{1,2,4\}$, and invariant controller $f$ is defined as $f(q) = \{b\}$ if $q\in Q_{inv}$ and $f(3) = \emptyset$.
It follows that $F(q)$ is equal to $f(q)$ for any $q\in Q$.
Intuitively, the function $F$ defines which actions the controller uses in each invariant state, e.g., action $b$ is used in state $1$.
Since the controller uses action $b$ in state $1$, the system is not tolerant if it is perturbed by transition $(1,b,3)$.
Similarly, action $b$ is also used in states $2$ and $4$ which results in $\tol{f} = Q\times Act\times Q\setminus \{(1,b,3),(2,b,3),(4,b,3)\}$.
Figure~\ref{fig:T-cx-inv-ctr3} depicts the perturbed system $T_{\tol{f}}$.
\end{example}

\subsection{Computing tolerance for invariance properties}
Problem~\ref{prob:comp-tol} investigates the computation of the set $\Delta$ for a general property $P$.
We strengthen Problem~\ref{prob:comp-tol} to invariance properties as to use the results of Theorem~\ref{theo:inv-unique}.
\begin{problem}\label{prob:comp-tol-inv}
Given LTS $T$, invariance property $P$, and invariant controller $f$, compute $\Delta(T,f,P)$.
\end{problem}

As in the solution of Problem~\ref{prob:comp-tol}, Problem~\ref{prob:comp-tol-inv} is decidable when $f$ has bounded memory as the controlled system $T|f$ is representable by an LTS.
By an abuse of notation, the LTS representation of $T|f$ is also denoted by $T|f$.

For invariance property $P$ and invariant controller $f$, $\Delta(T,f,P)$ is uniquely defined by $\tol{f}$. 
To compute the set $\tol{f}$, we need to characterize the function $F(q)$, which involves a reachability analysis of the perturbed system $T_{\Omega}|f$, i.e., $\exists x_{0\dots n}\in Paths(T_{\Omega}|f)$.
Therefore, we can use standard reachability algorithms to compute $\tol{f}$, e.g., see algorithms 48 and 49 in \cite{Baier:2008}.
These algorithms are linear in the number of states and transitions of the LTS in analysis.
Considering that the number of transitions in $T_{\Omega}|f$ is much larger than the number of states, the computation time of the $\Delta$ set is quadratic in the number of states and memory-size of controller $f$.

\begin{restatable}{proposition}{propalgoanalysis}
In the worst case, the effort involved in solving Problem~\ref{prob:comp-tol-inv} is $\mathcal{O}(|Q|^2M^2|Act|)$ where $M$ is the finite memory-size used by controller $f$.
\end{restatable}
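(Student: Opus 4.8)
The plan is to reduce Problem~\ref{prob:comp-tol-inv} to a single reachability computation on the finite LTS $T_{\Omega}|f$ and then bound the size of this LTS in terms of $|Q|$, $M$, and $|Act|$. By Theorem~\ref{theo:inv-unique}, since $f$ is an invariant controller and $P$ is an invariance property, the set $\Delta(T,f,P)$ consists of the single perturbation $\tol{f}$, and $\tol{f}$ is completely determined once the function $F$ is known. Hence it suffices to show that $F(q)$ can be computed for every $q\in Q$ within the stated bound, and that assembling $\tol{f}$ from $F$ costs no more.

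First I would make the product construction explicit. Because $f$ has finite memory-size $M$, the controlled system $T|f$ admits an LTS representation whose state space is contained in $Q\times \mathit{Mem}$, where $\mathit{Mem}$ is a set of at most $M$ memory states; thus $T|f$, and likewise the perturbed controlled system $T_{\Omega}|f$, has at most $|Q|M$ states. Its transition relation is a subset of $(Q\times \mathit{Mem})\times Act\times (Q\times \mathit{Mem})$, so in the worst case it contains $\mathcal{O}(|Q|^2 M^2 |Act|)$ transitions. Adding the perturbation $\Omega = Q_{inv}\times Act\times Q_{inv}$ only augments transitions between invariant states and does not change these bounds.

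Next I would run a standard forward reachability analysis on $T_{\Omega}|f$ (e.g., Algorithms~48 and~49 in \cite{Baier:2008}) to obtain the set of reachable state-memory pairs. By the definition of $F$ in Theorem~\ref{theo:inv-unique}, $a\in F(q)$ exactly when some reachable product state $(q,m)$ has $a$ among the actions that $f$ enables there; so reading off $F(q)$ for all $q$ amounts to a single linear pass over the reachable fragment. The transitions $(q,a,q')$ with $q\in Q_{inv}$, $q'\notin Q_{inv}$, and $a\in F(q)$ are then removed from $Q\times Act\times Q$ to form $\tol{f}$, which again is linear in the size of the transition set.

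Finally, since these reachability algorithms run in time linear in the number of states plus transitions, and the transition count $\mathcal{O}(|Q|^2 M^2 |Act|)$ dominates the state count $\mathcal{O}(|Q|M)$, the overall effort is $\mathcal{O}(|Q|^2 M^2 |Act|)$. The main point requiring care is the size analysis of the product LTS $T_{\Omega}|f$: one must argue that encoding the memory of $f$ multiplies the state count by $M$ and, consequently, the worst-case transition count quadratically, and that the reachability-based extraction of $F$ introduces no overhead beyond this single linear-in-transitions pass.
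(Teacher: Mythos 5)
Your proposal is correct and follows essentially the same route as the paper, whose proof is a one-line appeal to the linear-time reachability analysis of \cite{Baier:2008} applied to $T_{\Omega}|f$; you simply make explicit the product-construction bound of $|Q|M$ states and $\mathcal{O}(|Q|^2M^2|Act|)$ transitions that the paper leaves to the surrounding discussion.
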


\subsection{The least and most tolerant invariant controllers}

There is an inherent trade-off between tolerance and the restriction controller $f$ imposes on LTS $T$. 
Controllers that are more {\em permissive}~\cite{Bernet:2002,Lafortune:2008}, i.e., that allow more behaviors on $T$, are necessarily less tolerant and vice-versa.
The two extremes of this trade-off are the least and the most tolerant invariant controllers.
Formally, we search for controllers $f_1$ and $f_2$ that satisfy $\tol{f_1}\subseteq \tol{f} \subseteq \tol{f_2}$ for any other invariant controller $f$.
\begin{definition}\label{def:ctr-inv-empty}
We define controllers $f^{inv}$ and $f^{\emptyset}$ with respect to LTS $T$ and invariance property $P$ as: $f^{inv}(q) := A_{inv}(q)$ and $f^{\emptyset}(q) := \emptyset$ for any $q\in Q$.
\end{definition}

The controller $\finv$ selects the invariant control actions of each state as its decision whereas $\fem$ disables every action.
It follows that $\finv$ is the least tolerant controller whereas $\fem$ is the most tolerant among all invariant controllers.

\begin{restatable}{theorem}{theolargsmallcontroller}\label{theo:largest-smallest-controllers}
Let LTS $T$ and invariance property $P$ be given.
For any invariant controller $f$ with respect to $T$ and $P$, it follows that $\tol{\finv}\subseteq \tol{f} \subseteq \tol{\fem}$.
\end{restatable}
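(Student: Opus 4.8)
The plan is to reduce the three-way inclusion on $\tol{\cdot}$ to an opposite three-way inclusion on the transitions that Theorem~\ref{theo:inv-unique} deletes, and then to compare those deleted sets directly, exploiting that $\finv$ is the most permissive and $\fem$ the least permissive invariant controller.

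First I would record that both $\finv$ and $\fem$ are invariant controllers ($\fem(q)=\emptyset\subseteq A_{inv}(q)$ and $\finv(q)=A_{inv}(q)$), so Theorem~\ref{theo:inv-unique} applies and $\tol{\finv}$, $\tol{f}$, $\tol{\fem}$ are all well defined. For an invariant controller $g$ let $F_g$ denote the action map of Theorem~\ref{theo:inv-unique} instantiated at $g$, and set $B(g):=\{(q,a,q')\in Q_{inv}\times Act\times(Q\setminus Q_{inv})\mid a\in F_g(q)\}$, so that $\tol{g}=(Q\times Act\times Q)\setminus B(g)$. Since complementation reverses inclusion, the goal $\tol{\finv}\subseteq\tol{f}\subseteq\tol{\fem}$ is equivalent to $B(\fem)\subseteq B(f)\subseteq B(\finv)$. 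As the index set $Q_{inv}\times Act\times(Q\setminus Q_{inv})$ is common to all three, it suffices to show $F_{\fem}(q)\subseteq F_f(q)\subseteq F_{\finv}(q)$ for every $q\in Q_{inv}$.

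The right inclusion, $F_f(q)\subseteq F_{\finv}(q)$, is the heart of the argument. The key step I would prove is $Paths_{fin}(T_{\Omega}|f)\subseteq Paths_{fin}(T_{\Omega}|\finv)$: because $f$ is invariant, every action along a controlled run satisfies $a_i\in f(x_{0\dots i})\subseteq A_{inv}(x_i)=\finv(x_i)$, so each $f$-controlled run of $T_{\Omega}$ is also an $\finv$-controlled run. Given $a\in F_f(q)$ with witness $x_{0\dots n}$ (where $x_n=q$ and $a\in f(x_{0\dots n})$), this same path lies in $Paths_{fin}(T_{\Omega}|\finv)$, and $a\in f(x_{0\dots n})\subseteq A_{inv}(x_n)=\finv(x_{0\dots n})$, so $x_{0\dots n}$ witnesses $a\in F_{\finv}(q)$. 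Hence $F_f(q)\subseteq F_{\finv}(q)$ and therefore $B(f)\subseteq B(\finv)$.

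The left inclusion is immediate: since $\fem(x_{0\dots n})=\emptyset$ for every path, the condition $a\in\fem(x_{0\dots n})$ is never met, so $F_{\fem}(q)=\emptyset$, whence $B(\fem)=\emptyset$ and $\tol{\fem}=Q\times Act\times Q$; any $\tol{f}$ is contained in the full transition set, so $\tol{f}\subseteq\tol{\fem}$ trivially. I expect the only delicate point to be the bookkeeping around complementation and the verbatim transfer of the witness path; both hinge on the invariance hypothesis $f(x_{0\dots n})\subseteq A_{inv}(x_n)$, which is exactly what embeds the reachable behavior of a more restrictive controller into that of $\finv$.
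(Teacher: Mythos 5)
Your proof is correct and follows essentially the same route as the paper's: reduce the chain $\tol{\finv}\subseteq \tol{f}\subseteq \tol{\fem}$ by complementation to the inclusion $F_{\fem}(q)\subseteq F_f(q)\subseteq F_{\finv}(q)$ of the action maps from Theorem~\ref{theo:inv-unique}, which is precisely the one-line justification the paper gives (modulo a superscript swap in its notation). You additionally supply the details the paper omits, namely $F_{\fem}(q)=\emptyset$ and the path-transfer argument $Paths_{fin}(T_{\Omega}|f)\subseteq Paths_{fin}(T_{\Omega}|\finv)$ via $f(x_{0\dots n})\subseteq A_{inv}(x_n)$, which is exactly the right filling-in.
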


Intuitively, controller $\fem$ blocks the system from executing any action regardless of the perturbation.
For this reason, $\fem$ provides the largest tolerance set at the trade-off of blocking any run to be generated.
On the other hand, controller $\finv$ allows the maximum possible set of runs of $T$ that do not violate property $P$.
Consequently, $\finv$ is more susceptible to perturbations and provides the smallest tolerance set at the trade-off of allowing more behavior to be generated.
We leave for future work to better investigate this trade-off for controllers other than $\fem$ and $\finv$.

\subsection{Computing tolerable controllers}

Theorem~\ref{theo:largest-smallest-controllers} shows the existence of the most and the least tolerant invariant controllers.
In this section, we study a synthesis problem that exploit the spectrum of controllers with tolerance levels in between these controllers.
This synthesis problem explores the idea of obtaining a controller with a minimum desired level of tolerance.
However, specifying only a minimum level of tolerance is  insufficient as there might exist multiple controllers that satisfy this requirement.
Additional to a minimum level of tolerance, we search for a controller with tolerance as close as possible to the desired level as we do not want the controller to be unnecessarily restrictive.

\begin{problem}\label{prob:tolerant-ctr-close}
Given LTS $T$, invariance property $P$, and perturbation set $d$, synthesize controller $f^*$ such that (i) $d\subseteq \tol{f^*}$; and (ii) $\forall f$ that satisfies (i), $\tol{f^*}\subseteq \tol{f}$.
\end{problem}
Condition (i) requires that the tolerance of $f^*$ is at least $d$. 
Condition (ii) states that the tolerance of $\tol{f^*}$ is as close as possible to $d$, where closeness is defined by set inclusion.
Intuitively, any controller $f$ that tolerates $d$ is more tolerable than $f^*$, $\tol{f^*}\subseteq \tol{f}$.

The solution to Problem~\ref{prob:tolerant-ctr-close} comes from Theorem~\ref{theo:largest-smallest-controllers}.
The controller $\finv$ is the least tolerant controller with respect to $T$.
The solution to Problem~\ref{prob:tolerant-ctr-close} is defined by the least tolerant controller with respect to $T_d$.
\begin{restatable}{proposition}{propcontrollersynt} \label{prop:solution-tol-ctr-close}
Consider the setup in Problem~\ref{prob:tolerant-ctr-close}.
Controller $f^d$ defined as $f^d(q) := \{a\in Act\mid Post_{T_d}(q,a)\subseteq Q_{inv}\}$ is a solution to Problem~\ref{prob:tolerant-ctr-close}.
\end{restatable}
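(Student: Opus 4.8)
The plan is to recognize that the proposed controller $f^d$ is exactly the least tolerant invariant controller $\finv$ computed for the \emph{perturbed} system $T_d$ rather than for $T$: indeed $f^d(q)=\{a\in Act\mid Post_{T_d}(q,a)\subseteq Q_{inv}\}=A_{inv}^{T_d}(q)$, which is memoryless, and since $Post_T(q,a)\subseteq Post_{T_d}(q,a)$ we also get $f^d(q)\subseteq A_{inv}(q)$, so $f^d$ is an invariant controller with respect to $T$ and $P$ and $\tol{f^d}$ is well defined via Theorem~\ref{theo:inv-unique}. For brevity, for an invariant controller $g$ let $F_g$ be its associated function from Theorem~\ref{theo:inv-unique} and let $B_g := \{(q,a,q')\in Q_{inv}\times Act\times(Q\setminus Q_{inv})\mid a\in F_g(q)\}$ be the set of transitions removed in $\tol{g}$, so that $\tol{g}=(Q\times Act\times Q)\setminus B_g$; in particular $\tol{f^d}\subseteq\tol{f}$ is equivalent to $B_f\subseteq B_{f^d}$.

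The engine of the proof is a single observation, which I would state and prove first: \emph{if $f$ is an invariant controller with $d\subseteq\tol{f}$, then for every $x_{0\dots n}\in Paths_{fin}(T_{\Omega}|f)$ and every $a\in f(x_{0\dots n})$ we have $a\in f^d(x_n)$.} Every state reached in $T_{\Omega}|f$ lies in $Q_{inv}$: by induction, $x_0\in I\subseteq Q_{inv}$, and if $x_i\in Q_{inv}$ with the taken action $a_i\in f(x_{0\dots i})\subseteq A_{inv}(x_i)$, then $Post_{T_{\Omega}}(x_i,a_i)=Post_T(x_i,a_i)\cup Q_{inv}=Q_{inv}$, forcing $x_{i+1}\in Q_{inv}$. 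Hence $x_n\in Q_{inv}$ and $a\in F_f(x_n)$. Now take any $q''\in Post_{T_d}(x_n,a)$. If $(x_n,a,q'')\in R$ then $q''\in Post_T(x_n,a)\subseteq Q_{inv}$ by invariance; if $(x_n,a,q'')\in d$ and $q''\notin Q_{inv}$, then $(x_n,a,q'')\in B_f$, contradicting $d\subseteq\tol{f}=(Q\times Act\times Q)\setminus B_f$. Either way $q''\in Q_{inv}$, so $Post_{T_d}(x_n,a)\subseteq Q_{inv}$, i.e.\ $a\in f^d(x_n)$.

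With this fact, condition~(i) is immediate: if some $(q,a,q')\in d$ were in $B_{f^d}$ then $a\in f^d(q)$, forcing $Post_{T_d}(q,a)\subseteq Q_{inv}$; but $(q,a,q')\in d$ gives $q'\in Post_{T_d}(q,a)$, contradicting $q'\notin Q_{inv}$. Hence $d\cap B_{f^d}=\emptyset$ and $d\subseteq\tol{f^d}$. For condition~(ii), take any invariant $f$ with $d\subseteq\tol{f}$ and any $(q,a,q')\in B_f$; I must show $(q,a,q')\in B_{f^d}$, i.e.\ $a\in F_{f^d}(q)$. Since $a\in F_f(q)$, fix a witness $x_{0\dots n}\in Paths_{fin}(T_{\Omega}|f)$ with $x_n=q$ and $a\in f(x_{0\dots n})$. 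Applying the fact to every prefix $x_{0\dots i}$ and its action $a_i$ yields $a_i\in f^d(x_i)$, and since $f^d$ is memoryless the very same run $x_0a_0\cdots x_n$ is a controlled run of $T_{\Omega}|f^d$; thus $x_{0\dots n}\in Paths_{fin}(T_{\Omega}|f^d)$ and $q$ is reachable there. Applying the fact once more gives $a\in f^d(q)$, so $a\in F_{f^d}(q)$ and $(q,a,q')\in B_{f^d}$. Therefore $B_f\subseteq B_{f^d}$, i.e.\ $\tol{f^d}\subseteq\tol{f}$.

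The main obstacle is condition~(ii), and specifically establishing $a\in F_{f^d}(q)$ rather than merely $a\in f^d(q)$: membership in $F_{f^d}$ additionally demands that $q$ be reachable in $T_{\Omega}|f^d$, and $f^d$ in general enables different actions than $f$. The fact above is what overcomes this, since it lets me transport an entire $T_{\Omega}|f$-witness path verbatim into $T_{\Omega}|f^d$ (memorylessness of $f^d$ is essential, as is the observation that $\Omega$ makes all of $Q_{inv}$ mutually reachable). I would also remark that this result is the natural analogue of Theorem~\ref{theo:largest-smallest-controllers} applied to $T_d$: $f^d$ is the least tolerant invariant controller of the perturbed system, which is precisely why it is the least tolerant controller among those tolerating $d$.
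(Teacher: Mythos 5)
Your proof is correct. For condition~(i) it coincides with the paper's argument: a transition $(q,a,q')\in d$ with $q\in Q_{inv}$ and $q'\notin Q_{inv}$ forces $a\notin f^d(q)$, so that transition is never excised from $\tol{f^d}$. For condition~(ii) you take a genuinely different route. The paper argues by contradiction: it assumes an invariant $f$ with $d\subseteq\tol{f}$ and $\tol{f}\subset\tol{f^d}$, reduces ``without loss of generality'' to the case where $f$ is memoryless and the two tolerance sets differ by a single transition $(q^*,a^*,q')$, and then extracts from $a^*\in f(q^*)\setminus f^d(q^*)$ a transition of $d$ that $\tol{f}$ excludes. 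You instead prove the inclusion $\tol{f^d}\subseteq\tol{f}$ directly, via the transport lemma that any action $a\in f(x_{0\dots n})$ used along a path of $T_{\Omega}|f$ already satisfies $a\in f^d(x_n)$ whenever $d\subseteq\tol{f}$; this lets you carry each witness path for $F_f$ verbatim into $T_{\Omega}|f^d$ and conclude $F_f(q)\subseteq F_{f^d}(q)$ on the relevant states. Your version buys two things the paper's sketch leaves implicit: it handles invariant controllers $f$ with arbitrary finite memory without any normalization step, and it explicitly discharges the reachability side condition in the definition of $F_{f^d}$ (that $q$ must actually be reached in $T_{\Omega}|f^d$), which is exactly the point the paper's single-transition reduction quietly bypasses. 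You also verify up front that $f^d$ is an invariant controller with respect to $T$, so that $\tol{f^d}$ is well defined via Theorem~\ref{theo:inv-unique} --- a hypothesis the paper uses without comment. The cost is a somewhat longer argument; the paper's version is shorter but asks the reader to accept the WLOG reduction.
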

Note that the definition of $f^d$ is almost the same as the definition of $\finv$.
Their only difference is that $f^d$ is defined over $T_d$ whereas $\finv$ is defined over $T$.

\section{Case study} \label{sect:experiments}

In this section, we apply and discuss our notion of tolerance to the surveillance example described in Section~\ref{sect:motivating}.
We show how our definition provides useful information about the tolerance of controllers.
First, we show that our definition captures types of environmental perturbations that occur in practice.
Next, we compare the tolerance of two different controllers and the physical meaning of their tolerance.
Our analysis shows that our formal tolerance notion complies with the informal intuition about the tolerance of these controllers.
We also show that these controllers can be computed via Problem~\ref{prob:tolerant-ctr-close}.

We implement a tool to compute the tolerance of invariant properties as well as controllers $\fem,\finv$, and $f^d$ on top of the MDESops tool\footnote{\url{https://gitlab.eecs.umich.edu/M-DES-tools/desops}}.
Our evaluation was done on a Ubuntu 20.04 LTS OS machine with $3.2$GHz CPU and $32$GB memory.
Our implementation is available on GitHub \footnote{\url{https://github.com/romulo-goes/tolerancetool}} and the artifact in \cite{Meira-Goes:2021-sw-artifact} reproduces our results in this paper. 
We analyze the performance of our tool by scaling the motivating example as well as comparing to the naive algorithm to obtain tolerance.
\begin{figure}[!h]
\centering  
\subfloat[LTS representation of $T|f_1$]  
{  
\centering
\includegraphics[width=0.16\columnwidth]{LTS-f1}
\label{fig:LTS-f1}
}
\qquad
\subfloat[LTS representation of $T|f_2$]  
{  
\centering
\includegraphics[width=0.34\columnwidth]{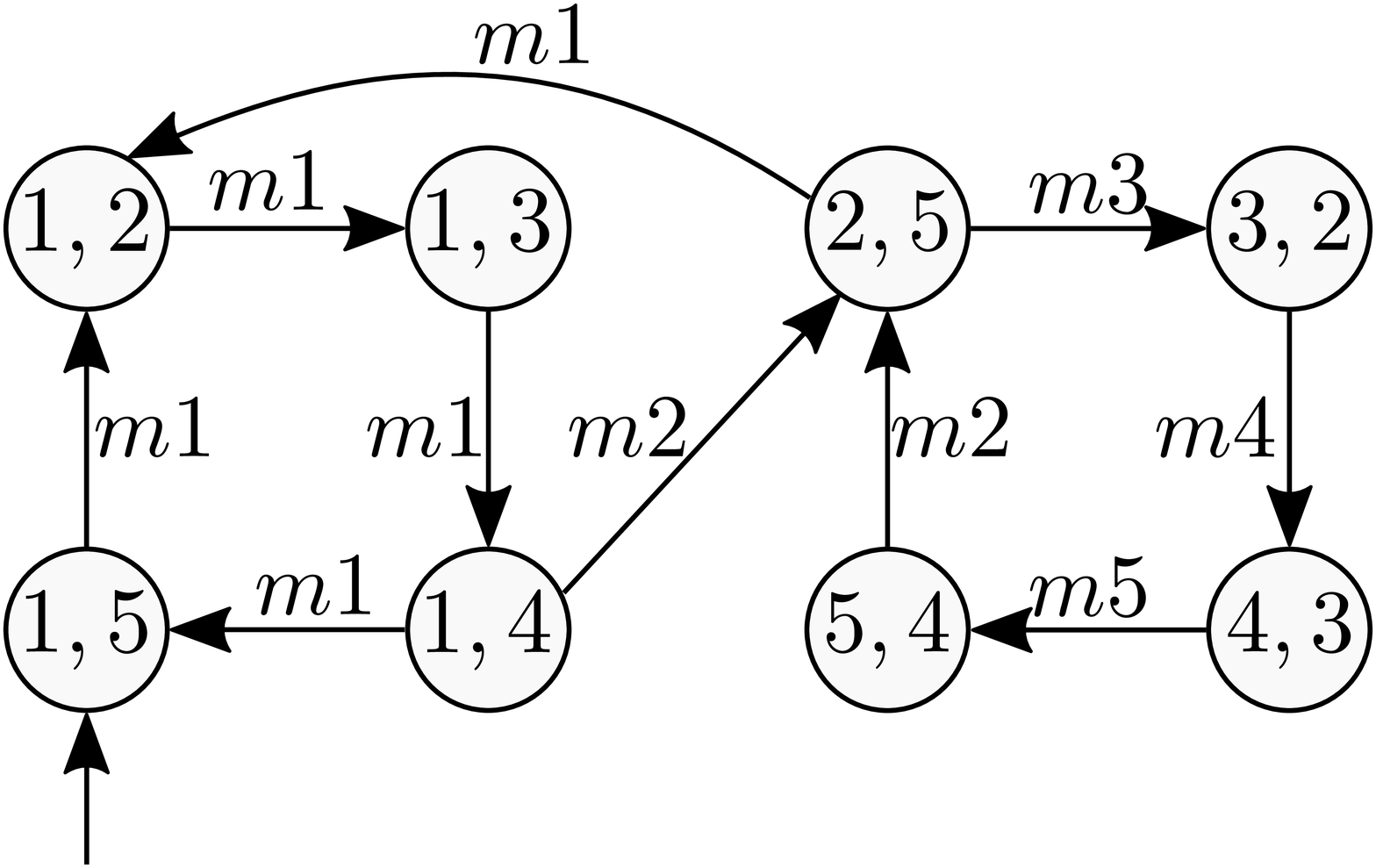}
\label{fig:LTS-f2}
}
\caption{$Ego$ under control of $f_1$ and $f_2$}
\label{fig:reach-LTS}
\end{figure}
\subsection{Models and property}

Example~\ref{example:motivating} describes how the surveillance example is modeled as an LTS.
The invariance property is defined by $Q_{inv} = Q\setminus \{(2,2),(3,3),(4,4),(5,5)\}$, i.e., $srv$ captures $ego$.
Next, we define two controllers that satisfy this invariance property.
First, we consider controller $f_1$ to be the one described in Example~\ref{example:controller} where it maintains $ego$ in location $1$.
On the other hand, controller $f_2$  ensures that $ego$ visits all locations without being captured by $srv$.
Formally, $f_2$ is defined as follows: $f_2(q) = f_1(q)$ if $q\in Q\setminus \{(1,4),(2,5)\}$, $f_2(1,4) = \{m_1, m_2\}$, and $f_2(2,5) = \{m1,m3\}$.
Figure~\ref{fig:reach-LTS} shows the LTS representations of $T|f_1$ and $T|f_2$.

\subsection{Computing the tolerance level}

We use our tool to compute the tolerance level for both controllers $f_1$ and $f_2$.
Note that LTS $T$ has $20$ states, $5$ actions, $60$ transitions, and the invariance set $Q_{inv}$ has $16$ states.
The tolerance level $\tol{f_1}$ has $1936$ transitions for which $1876$ are new transitions with respect to the transition relation $R$.
On the other hand, $\tol{f_2}$ has $1928$ transitions where $1868$ are new transitions.
In both cases, it takes about $8ms$ to compute the tolerance level.
Since every control action of $f_2$ is a subset of the corresponding one selected by $f_1$, it follows that $\tol{f_2}\subset \tol{f_1}$.
In comparison, the most tolerant controller $\fem$ characterized by $\tol{\fem} = Q\times Act\times Q$ has $2000$ transitions, i.e., the transition relation is complete.
The least tolerant controller $\finv$ has a tolerance level $\tol{\finv}$ with $1716$ transitions.

\subsection{Comparing controllers}

Controllers $f_1$ and $f_2$ select the same control decisions in all states except in states $(1,4)$ and $(2,5)$.
In these two states, controller $f_2$ allows $ego$ to venture closer to the building. 
Therefore, controller $f_1$ should be more tolerant than controller $f_2$.
This intuition is confirmed by our notion of tolerance where $\tol{f_2}\subset \tol{f_1}$, i.e., controller $f_1$ tolerates more perturbations than $f_2$.

\begin{figure}[!h]
\centering  
\subfloat[Partial $T_{\tol{f_1}}|f_1$]  
{  
\centering
\includegraphics[width=0.25\columnwidth]{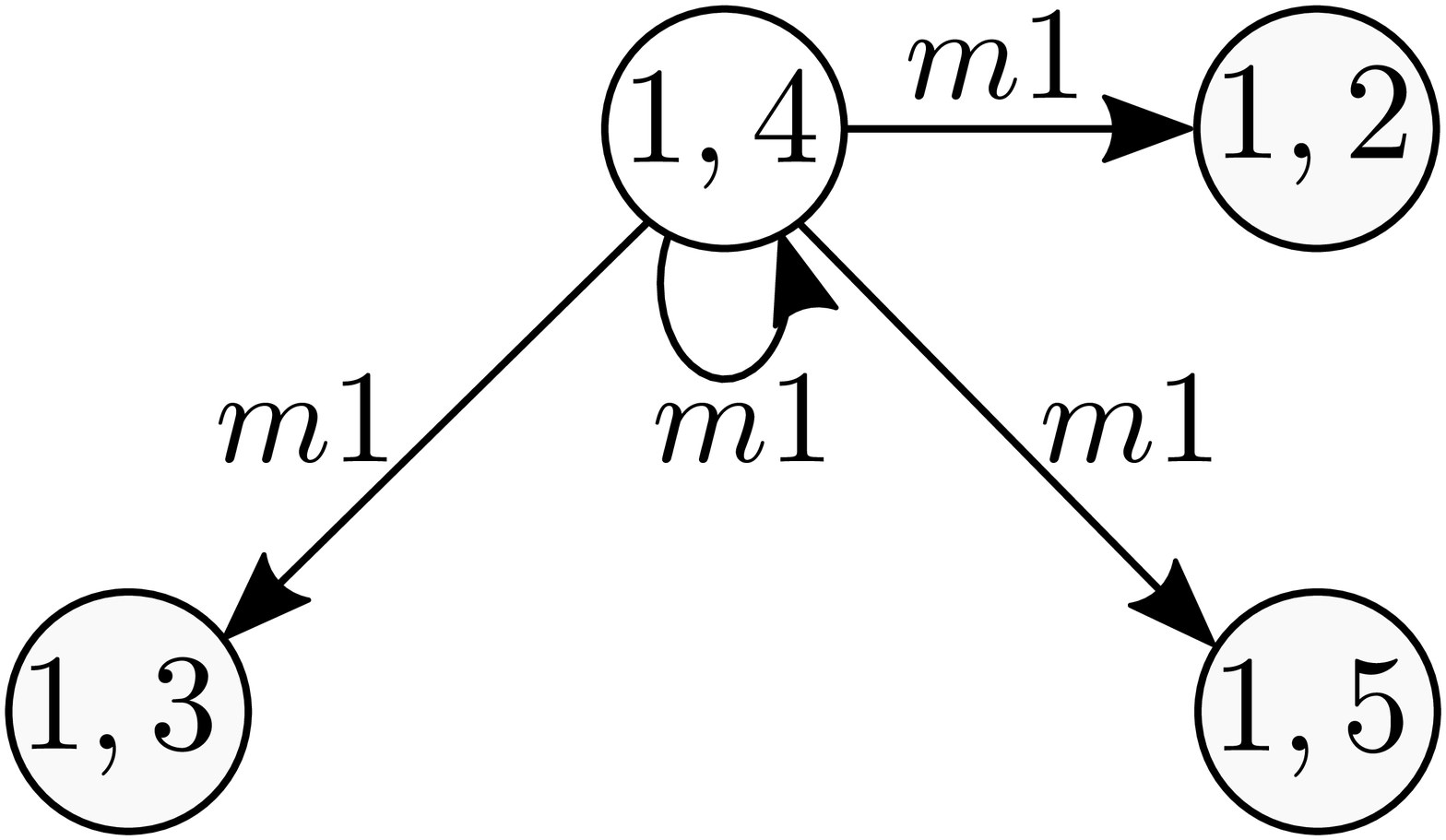}
\label{fig:LTS-tol-f1}
}
\qquad
\subfloat[Partial $T_{\tol{f_2}}|f_2$]  
{  
\centering
\includegraphics[width=0.25\columnwidth]{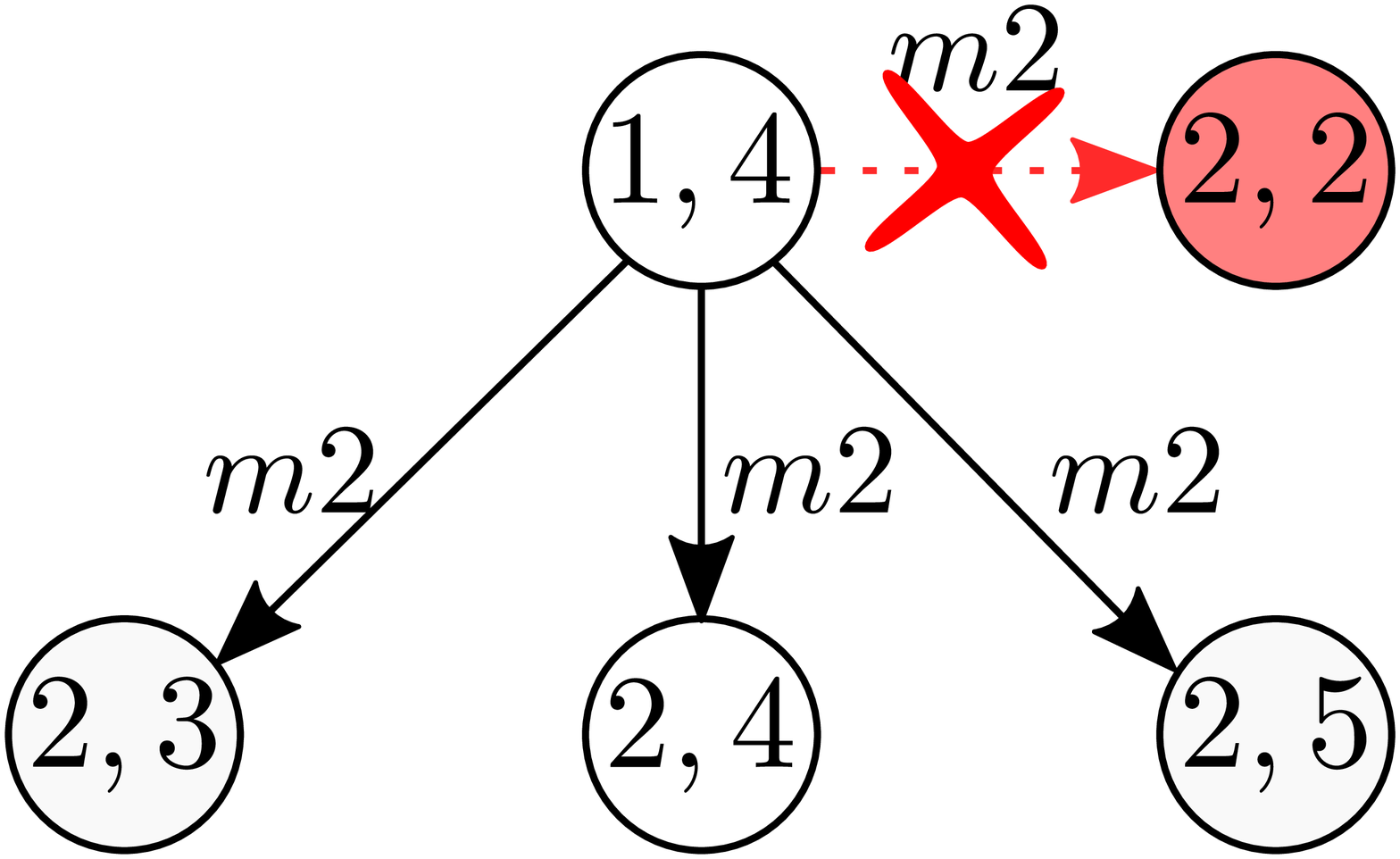}
\label{fig:LTS-tol-f2}
}
\caption{Tolerance of controllers $f_1$ and $f_2$ with respect to $srv$ perturbations}
\label{fig:LTS-tol-ctr}
\end{figure}

Figure~\ref{fig:LTS-tol-ctr} helps us explain the difference between these two controllers with respect to $srv$ strategy perturbation.
In the case of $f_1$, this controller selects action $m1$ in state $(1,4)$.
As shown in Fig.~\ref{fig:LTS-tol-f1}, $f_1$ tolerates any perturbation from $srv$. 
For example, it tolerates $srv$ being faster than expected, e.g., transition $(1,4), m1, (1,2)$ is tolerated.
On the other hand, $f_2$ cannot tolerate this type of perturbations as shown in Fig.~\ref{fig:LTS-tol-f2}.
Controller $f_2$ can select action $m2$ in state $(1,4)$, but it does not tolerate transition $(1,4), m2, (2,2)$.

Controllers $f_1$ and $f_2$ can be synthesized by solving Problem~\ref{prob:tolerant-ctr-close}, i.e., by defining a minimum level of tolerance these controllers need to achieve.
In the case of controller $f_1$, this minimum level of tolerance is informally defined by not constraining $srv$'s movements.
In other words, it is required that $srv$ should match any decision $ego$ takes when it goes to states $2,\dots, 5$, e.g., $ego$ must tolerate transitions $\big((1,4),m2,(2,2)\big)$, $\big((1,2),m2,(2,2)\big)$, $\big((1,4),m4,(4,4)\big)$, etc.
Since controller $f_1$ is too restrictive, we relax its minimum level of tolerance to obtain controller $f_2$.
Details about these sets can be found in our implementation.

As we mention in Section~\ref{sect:tolerance-invariance}, there is a trade-off between tolerance and the behavior allowed by the controller.
In this example, controller $f_2$ ventures to locations $2,\dots, 5$ whereas controller $f_1$ maintains $ego$ in location $1$.
The benefits of $f_1$ being more tolerant than $f_2$ comes at the cost of being less permissive.

\subsection{Performance analysis}

To test the performance of our tool, we scale the surveillance example by adding more locations as well as more surveillance drones.
More details about the modified surveillance models is provided in our GitHub repository.
Table~\ref{table:summary-performance} summarizes the evaluation of our tool.
The tolerances $\tol{\finv}$ in these examples are almost a complete transition relation, i.e., $\tol{\finv}\approx Q\times Act\times Q$.
Since our tool is built as proof-of-concept, it ran out of memory and it could not compute the tolerance for the system with $10$ locations and $3\,\ srv$ drones.
The complete transition relation for this system has $262\ 144\ 000$ transitions.
As part of future work, we plan to improve our tool by symbolic encoding of the LTS, e.g., using OBDD \cite{Bryant:1992}.
\renewcommand{\arraystretch}{1.2}
\setlength{\tabcolsep}{5pt}
\begin{table}
\begin{center}
\begin{tabular}{@{}|c|c|c|c|c|c|@{}}
\hline
System & $|Q|$ & $|Act|$ & $|R|$ & $|\tol{\finv}|$ & time \\\hline 
$1\,ego$, $1\,srv$, 5 locations & $20$ & $5$ & $60$ & $1\,716$ & $0.01$ sec \\\hline
$1\,ego$, $1\,srv$, 10 locations & $80$ & $10$ & $272$ & $59\,030$ & $0.46$ sec \\\hline  
$1\,ego$, $2\, srv$, 10 locations & $640$ & $10$ & $2\,176$ & $3\,618\,978$ & $30.88$ sec \\\hline
$1\,ego$, $3\,srv$, 10 locations & $5\,120$ & $10$ & $17\,408$ & out of memory & $-$ sec \\\hline
 \end{tabular} 
\caption{Tolerance of $\finv$: scalability of the case study}
\label{table:summary-performance}
\end{center}
\end{table}

We also compare the algorithm to solve Problem~\ref{prob:comp-tol-inv} with the time to verify every possible pertubed system $T_d|f$ as described in the naive algorithm to solve Problem~\ref{prob:comp-tol}.
Since the naive algorithm verifies $T_d|f\models P$ for every perturbation $d\subseteq Q\times Act\times Q$, it is infeasible to use the surveillance example since there are $2^{1940}$ systems to verify.
For this reason, we make this comparison using a modified version of the LTS shown in Fig.~\ref{fig:T-cx-inv-ctr}. 
We use FuseIC3, an off-the-shelf tool that efficiently verifies a family of LTS by reusing information from earlier verification runs \cite{Dureja:2017}.
Table~\ref{table:summary-comparison} summarizes the results of our comparison.
Although FuseIC3 efficiently verifies a large family of LTS, it was not developed to solve Problem~\ref{prob:comp-tol-inv}.
On the other hand, our algorithm directly computes the tolerance of the LTS using the results of Theorem~\ref{theo:inv-unique}.
\begin{table}
\begin{center}
\begin{tabular}{@{}|c|c|c|c|c|c|@{}}
\hline
$|Q|$ & $|Act|$ & $|R|$ & \# perturbations & Our method &  FuseIC3  \\\hline 
$4$ & $2$ & $22$ & $2^{10}$ &\textbf{0.001 sec} & 1.5 sec \\\hline
$4$ & $2$ & $17$ & $2^{15}$ &\textbf{0.001 sec} & 48.1 sec  \\\hline
\end{tabular} 
\caption{Comparison with FuseIC3}
\label{table:summary-comparison}
\end{center}
\end{table}

\vspace*{-1.8cm}
\section{Related work} \label{sect:related}
Several works investigated notions of robustness, tolerance, and resilience for discrete transition systems by quantifying perturbation via cost functions, metrics, etc. \cite{Bloem:2014,Bloem:2009,Chaudhuri:2011,Henzinger:2014,Majumdar:2011,Neider:2020,Roopsha:2013,Tabuada:2012}.
Our notion of tolerance is qualitative as it captures the set of perturbations for which the controller guarantees the property and avoids the need of external cost functions over the discrete transition system.
With respect to qualitative robustness notions, the work in \cite{Topcu:2012} investigated synthesizing controllers robust against perturbation sets specified by the designer.
Our notion of tolerance defines these perturbation sets for each controller.
In \cite{Tabuada:2016}, authors presented the notion of robust linear temporal logic (rLTL) which extends the binary view of LTL to a 5-valued semantics to capture different levels of property satisfaction.
This work is tangent to ours as it focuses on specifying robustness.

Of particular relevance to this paper are the works in \cite{Kang:2020,Zhang:2020}, which inspired our notion of tolerance. 
The notion of robustness presented in \cite{Kang:2020,Zhang:2020} is only semantically defined.
In \cite{Zhang:2020}, the environmental perturbation is captured by a set of input traces the software system accepts.
Perturbations in \cite{Kang:2020} are connected to different attack threats models for software systems.
In our work, we define the syntax of perturbations as additional transitions in the environment model.
Moreover, the semantics in our perturbation definition differs from those in \cite{Kang:2020,Zhang:2020}.

There also exist a vast literature on robust control in discrete event systems \cite{Alves:2019,Cury:1999,Lin:1993,Lin:2014,Lin:2019cdc,Meira-Goes:2019cdc,Meira-Goes:2021tac-robust,Rohloff:2012,Takai:2004,Wang:2016,Young:1995}. 
Robustness in \cite{Alves:2019,Lin:2014,Lin:2019cdc,Meira-Goes:2019cdc,Meira-Goes:2021tac-robust,Rohloff:2012,Wang:2016} are specific to communication delays, loss of information, or deception attacks.
Our notion of tolerance represents model uncertainty, which can attributed to unreliable communication channels in the controlled system.
Robustness against model uncertainty is tackled in the works of \cite{Cury:1999,Lin:1993,Takai:2004,Young:1995}.
Although our notion of tolerance resembles the ones in \cite{Cury:1999,Lin:1993,Takai:2004,Young:1995}, the semantics of our work differs from theirs as we use a different modeling formalism. 

The description of the general algorithm to compute $\Delta(T,f,P)$ for any property $P$ connects our work to the work on verifying software product lines (SPL) described as \emph{feature transitions systems} (FTS) \cite{Classen:2010,Classen:2013}.
However, verifying FTS has exponential worst-case time complexity even for invariance properties whereas our method has quadratic worst-case time complexity.
Modal transition systems (MTS) \cite{Larsen:1988,Huth:2001} can also be used to describe a family of LTS, where transitions can be mandatory (must transitions) or optional (may transitions).
In \cite{DIppolito:2012}, a controller realizability problem is studied for an environment modeled by MTS, where a controller satisfies a property in all, some, or none of the LTS family.
Our notion of controller explicitly computes which systems in the LTS family satisfy the property.

The last body of work related to this paper is the work on fault-tolerance.
Fault-tolerance has been studied in the context of distributed systems \cite{Gartner:1999,Lynch:1996,Pease:1980}.
The work in \cite{Bonakdarpour:2008,Cheng:2011,Ebnenasir:2005,Girault:2009} focuses on synthesis of fault-tolerant programs by retrofitting initial fault-intolerant programs.
These works focus on specific types of fault models, whereas our tolerance notion upper-bounds the perturbations (faults) the controller tolerates.
In the context of control of discrete transition systems, \cite{Paoli:2005} proposes a fault-tolerance framework for a control system.
However, this work requires the fault model to be explicitly specified.

\section{Conclusion}\label{sect:conclusion}

In this paper, we introduced a new notion of tolerance against environmental perturbations.
This notion defines an upper bound on the possible environmental perturbations that a controller tolerates with respect to a desired property.
We provided a general technique to compute this tolerance level for general properties modeled as regular languages over finite strings as well as a more efficient technique specifically for invariance properties.
We also investigate the problem of synthesizing an invariant controller that achieves a given minimum threshold of tolerance.

\vspace*{-.5cm}
\subsubsection{Limitations and future work:}
Our notion of tolerance is syntactically defined by additional transitions and semantically defined by the controlled behavior generated by these additional transitions.
However, the additional transitions and new controlled behavior need to be analyzed by a designer as to explain them within the context of the model.
We leave to future work to bridge this gap between the syntax of our notion of tolerance with the context of the model to provide tolerance explanations to the designer.
Another limitation is that our notion of tolerance can only be efficiently computed for invariance properties.
As part of future work, we will devise more efficient techniques for properties different than invariance.
At the end of Section~\ref{sect:experiments}, we discussed the trade-off between tolerance and permissiveness of two different controllers.
Due to space limitations, we did not provide an in-depth discussion, and this is left as part of future work. 

\bibliographystyle{splncs04}
\bibliography{bib_romulo.bib}

\longfalse
\longtrue
\iflong 
\appendix
\section*{Appendix}

\propmaxdev*
\begin{proof}
Direct proof. Without loss of generality, we assume that $d_1\cap R = d_2\cap R = \emptyset$.
Based on the transitions being used/active in the controlled system, we partition the perturbation in two sets: active transitions, and not active transitions.
We show that set of active transitions in $d_1$ and $d_2$ are equal since $Runs(T_{d_1}|f) = Runs(T_{d_2}|f)$.
Lastly, we present that the union of $d_1$ and $d_2$ satisfies $d_1,d_2\preceq_f d_1\cup d_2$ and $Runs(T_{d_1\cup d_2}|f) = Runs(T_{d_1}|f) = Runs(T_{d_2}|f)$. 

Let $d_i^{act} := \{(q,a,q')\in d_i\mid \exists x_{0}a_0\dots x_n\in Runs(T_{d_i}|f).\ (q,a,q') = (x_{n-1},a_{n-1},x_{n})\}$ for $i\in \{1,2\}$ denote the set of active transitions.
By construction, it follows that $Runs(T_{d_i^{act}}|f) = Runs(T_{d_i}|f)$ for $i\in\{1,2\}$. 

We show by subset inclusion that $d_{1}^{act}=d_{2}^{act}$.
First, we demonstrate that $d_1^{act} \subseteq d_2^{act}$.
By the definition of $d_1^{act}$ and the condition $Runs(T_{d_1}|f) = Runs(T_{d_2}|f)$, we have
\begin{align}
(q,a,q')\in d_1^{act} &\Rightarrow \exists x_{0}a_0\dots x_n\in Runs(T_{d_1}|f).\ (q,a,q') = (x_{n-1},a_{n-1},x_{n})\\
&\Rightarrow \exists x_{0}a_0\dots x_n\in Runs(T_{d_2}|f).\ (q,a,q') = (x_{n-1},a_{n-1},x_{n}) \label{eq:implication_1}\\
& \Rightarrow (q,a,q') \in d_2\cup R\\
& \Rightarrow (q,a,q') \in d_2 \text{ as } d_1\cap R = \emptyset \label{eq:implication_2}
\end{align}
The implications in lines \ref{eq:implication_1} and \ref{eq:implication_2} provide that $(q,a,q')\in d_2^{act}$.
Using similar arguments, it follows that $d_2^{act}\subseteq d_1^{act}$.
We conclude that  $d_1^{act} = d_2^{act}$.

Finally, we have that $d_1\cup d_2 = d_1^{act}\cup (d_1\setminus d_1^{act})\cup(d_2\setminus d_1^{act})$ which implies that $Runs(T_{d_1\cup d_2}|f) = Runs(T_{d_1}|f) = Runs(T_{d_2}|f)$ and $d_1,d_2\preceq_f d_1\cup d_2$.
\end{proof}

\lemmatoleranceuniqueness*
\begin{proof}
By contradiction.
Assume that there exist $\Delta_1,\Delta_2 \subseteq 2^{Q\times Act\times Q}$ such that they satisfy conditions 1, 2, and 3 in Def.~\ref{def:tolerance} and $\Delta_1 \neq \Delta_2$.
Without loss of generality, we assume that $\exists d_1\in \Delta_1\setminus\Delta_2$.
Since $d_1\in \Delta_1$, we have that $T_{d_1}|f\models P$ as $\Delta_1$ satisfies 1.
As $\Delta_2$ satisfies 2 and $d_1\notin \Delta_2$, we have that $\exists d_2\in \Delta_2$ such that $T_{d_2}|f\models P$ and $Runs(T_{d_1}|f)\subset Runs(T_{d_2}|f)$ or $d_1\subseteq d_2$ ($d_1\preceq d_2$).
Since $d_1\in \Delta_1\setminus\Delta_2$, it follows that $Runs(T_{d_1}|f)\subset Runs(T_{d_2}|f)$ or $d_1\subset d_2$.
Back to $\Delta_1$, condition 3 implies that $d_2\notin\Delta_1$ since $d_1\in \Delta_1$ and $Runs(T_{d_1}|f)\subseteq Runs(T_{d_2}|f)$.
Furthermore, it does not exist $d\in \Delta_1$ such that $Runs(T_{d_2}|f)\subset Runs(T_{d}|f)$ or $d_2\subseteq d$, because $d_1 \in \Delta_1$ and $Runs(T_{d_1}|f)\subseteq Runs(T_{d_2}|f)\subseteq Runs(T_{d}|f)$, and $\Delta_1$ satisfies condition 3.
Consequently, the perturbation $d_2$ is a witness of the $\Delta_1$ violating condition 2, which contradicts our assumption that $\Delta_1$ satisfies conditions 1, 2, and 3.
\end{proof}

\leminvctr*
\begin{proof}
It directly follows from the definition of invariant controllers (Def.~\ref{def:inv-actions}).
\end{proof}

\theoinvunique*
\begin{proof}
We first show by contradiction that $|\Delta(T,f,P)| = 1$. 
For simplicity, we write $\Delta$ instead of $\Delta(T,f,P)$.
Since $\emptyset$ is always a tolerable perturbation, it follows that $|\Delta| \geq 1$.
Assume that $|\Delta| > 1$ and let $d_1,d_2\in \Delta$.
In the definition of $\Delta$, condition (3) states that $d_1\not\preceq d_2$ and $d_2\not\preceq d_1$.
We define $d_i^{inv}:=\{(q,a,q')\in d_i\mid q,q'\in Q_{inv} \wedge a\in A_{inv}(q)\}$ for $i \in \{1,2\}$.
By construction, the controlled system $T_{d^{inv}_i}|f$ generates the same runs as $T_{d_i}|f$ for $i \in \{1,2\}$ otherwise $d_i$ is not tolerable.
As $d_1,d_2 \in \Delta$, it must be that $d_1^{inv}$ and $d_2^{inv}$ are incomparable, otherwise $d_1\preceq d_2$ or $d_2\preceq d_1$.
Because $d_1^{inv}$ and $d_2^{inv}$ only define transitions within $Q_{inv}$ and $f$ is invariant, we have that $d_1^{inv}\cup d_2^{inv}$ is a tolerable perturbation, i.e., $T_{d_1^{inv}\cup d_2^{inv}}|f\models P$.
The perturbation $d_1^{inv}\cup d_2^{inv}$ must be represented in $\Delta$ as stated by condition (2) in Def.~\ref{def:tolerance}.
Since $d_1^{inv}$ and $d_2^{inv}$ are incomparable, the representation of $d_1^{inv}\cup d_2^{inv}$ must be different than $d_1$ and $d_2$.
Thus, there exist $d_3\in \Delta$ different than $d_1$ and $d_2$ such that $d_1^{inv}\cup d_2^{inv}\preceq d_3$.
Since the condition $Runs(T_{d_1^{inv}}|f) = Runs(T_{d_1}|f)\subset Runs(T_{d_1^{inv}\cup d_2^{inv}}|f)$, it follows that $d_1\preceq d_3$, which violates condition (3) in the definition of $\Delta$.
That is, we have two perturbation sets in $\Delta$ that are comparable via $\preceq_f$.
We reached a contradiction.

Next, we show by contradiction that $\lceil f \rceil\in \Delta$.
Assume that perturbation $d\neq \tol{f}$ satisfies $T_d|f\models P$ and $d\in \Delta$.
By construction of $\tol{f}$, the runs generated by $T_{\tol{f}}|f$ are the same as the ones generated by $T_{\Omega}|f$.
Therefore, the perturbation $\tol{f}$ is tolerable.
We have shown previously that $|\Delta| = 1$, which ensures that $\tol{f}\preceq d$ since $d\in \Delta$.
Therefore, it must be that $Runs(T_{\tol{f}}|f)\subset Runs(T_{d}|f)$ or $Runs(T_{\tol{f}}|f)= Runs(T_{d}|f)$ and $\tol{f}\subset d$.
If $Runs(T_{\tol{f}}|f)\subset Runs(T_{d}|f)$, then $d$ is not a tolerable perturbation since $Runs(T_{\tol{f}}|f) = Runs(T_{\Omega}|f)$.
If $Runs(T_{\tol{f}}|f)= Runs(T_{d}|f)$ and $\tol{f}\subset d$, then there exists a transition in $d$ that is not in $\tol{f}$ and this transition is not active in any run.
However, by the definition of $\tol{f}$, any transition in $d$ that is not in $\tol{f}$ implies that $Runs(T_{\tol{f}}|f)\subset Runs(T_{d}|f)$ and $d$ not being a tolerable perturbation.
It follows that $d$ is not a tolerable perturbation, which contradicts our assumption that $d\in \Delta$.
\end{proof}

\theolargsmallcontroller*
\begin{proof}
It follows from $F^{inv}(q) \subseteq F(q)\subseteq F^{\emptyset}$ for any invariant controller $f$ where $F^{inv}$, $F^{\emptyset}$, and $F$ are defined as in Theorem~\ref{theo:inv-unique} for controllers $\fem,\ \finv$, and $f$, respectively. 
\end{proof}

\propalgoanalysis*
\begin{proof}
It follows from the worst case effort to perform a reachability analysis over an LTS \cite{Baier:2008}.
\end{proof}

\propcontrollersynt*
\begin{proof}
We begin showing that $d\subseteq \tol{f^d}$ by a direct proof.
Recall that $\tol{f^d} = Q\times Act\times Q\setminus \{(q,a,q')\in Q_{inv}\times Act\times Q\setminus Q_{inv}\mid a\in F^d(q)\}$ where $F^d(q)= \{a\in Act\mid \exists x_{0\dots n}\in Paths_{fin}(T_{\Omega}|f^d).\ a \in f^d(x_{0\dots n})\wedge q = x_n\}$.
For any $t\in d\setminus Q_{inv}\times Act\times Q\setminus Q_{inv}$, we have that $t\in \tol{f^d}$ by definition of $\tol{f^d}$.
If $(q,a,q')\in d\cap Q_{inv}\times Act\times Q\setminus Q_{inv}$, then $a\notin f^d(q)$ by the definition of $f^d$.
It follows that $(q,a,q')\in \tol{f^d}$ since action $a$ is never used in state $q$.
Thus, we have that $d\subseteq \tol{f^d}$.

For the second part, for any invariant controller $f$ that satisfies condition (i), we must show that $\tol{f^d}\subseteq \tol{f}$.
We show it by contradiction.
Assume that there exists an invariant controller $f$ that satisfies (i) and $\tol{f}\subset \tol{f^d}$.
Without loss of generality, we assume that $\tol{f}= \tol{f^d}\setminus \{(q^*,a^*,q')\}$ for some $q^*,q'\in Q$ and $a^*\in Act$ and $f$ is memoryless.
These assumptions together with the definition of $\tol{f}$ guarantee that $q^*\in Q_{inv}$, $q'\in Q\setminus Q_{inv}$, $f(q) = f^d(q)$ for any $q\in Q\setminus\{q^*\}$, and $f^d(q^*) = f(q)\setminus \{a^*\}$.
Moreover, it follows that $Post_{T_d}(q^*,a^*)\not\subseteq Q_{inv}$ otherwise $a^*\in f^d(q^*)$.
As $f$ is invariant, the successor states of $q^*$ under action $a^*$ are a subset of $Q_{inv}$, i.e., $Post_T(q^*,a^*) \subseteq Q_{inv}$.
The last two statements ensure that $(q^*,a^*,q')\in d$.
We can conclude that $d\not\subseteq\tol{f}$ which contradicts assumption (i).
\end{proof}
\fi

\end{document}